\newcommand{\dist}{{\mbox{dist}}}
\title{Super-Fast 3-Ruling Sets\footnote{Part of this work was done while the first author was visiting the University of Iowa as an Indo-US Science and Technology Forum
Research Fellow. The work of the second author is supported in part by 
National Science Foundation grant CCF 0915543.}}
\author{Kishore Kothapalli\footnote{International Institute of Information Technology,  Hyderabad, India 500 032
  \texttt{kkishore@iiit.ac.in}}  $\;\; $and $\;\;$ Sriram Pemmaraju\footnote{Department of Computer Science, The University of Iowa, 
  Iowa City, IA 52242-1419, USA, 
 \texttt{sriram-pemmaraju@uiowa.edu}}
}
\date{}
\begin{document}

\maketitle

\begin{abstract}
A $t$-ruling set of a graph $G = (V, E)$ is a vertex-subset $S \subseteq V$ that is independent
and satisfies the property that every vertex $v \in V$ is at a distance of at most $t$ from some vertex in $S$.
A \textit{maximal independent set (MIS)} is a 1-ruling set.
The problem of computing an MIS on a network is a fundamental problem in distributed algorithms and the fastest
algorithm for this problem is the $O(\log n)$-round algorithm due to Luby (SICOMP 1986) and Alon et al.~(J. Algorithms 1986) from more than 25 years ago.
Since then the problem has resisted all efforts to yield to a sub-logarithmic algorithm.
There has been recent progress on this problem, most importantly an $O(\log \Delta \cdot \sqrt{\log n})$-round 
algorithm on graphs with $n$ vertices and maximum degree $\Delta$, due to Barenboim et al. (Barenboim, Elkin, Pettie, and Schneider, April 2012, arxiv 1202.1983;
to appear FOCS 2012).
The time complexity of this algorithm is sub-logarithmic for $\Delta =
2^{o(\sqrt{\log n})}$. 

We approach the MIS problem from a different angle and ask if $O(1)$-ruling sets
can be computed much more efficiently than an MIS?
As an answer to this question, we show how to compute a 2-ruling
set of an $n$-vertex graph in $O((\log n)^{3/4})$ rounds.
We also show that the above result can be improved for special
classes of graphs. For instance, on high girth graphs (girth 6 or more), trees,
and graphs of bounded arboricity, we show how to compute 3-ruling sets in 
$\exp(O({\sqrt{\log\log n}}))$ rounds, $O((\log \log n)^2 \cdot \log\log\log n)$ rounds,
and $O((\log\log n)^3)$ rounds, respectively.

Our main technique involves randomized sparsification that rapidly reduces the graph degree
while ensuring that every deleted vertex is close to some vertex that remains.
This technique may have further applications in other contexts, e.g., in designing 
sub-logarithmic distributed approximation algorithms.
Our results raise intriguing questions about how quickly an MIS (or 1-ruling sets) 
can be computed, given that 2-ruling sets can be computed in sub-logarithmic rounds.
\end{abstract}

\section{Introduction}
\label{section:Introduction}
\ignore{
1 MIS problem: the fact that it is so fundamental to distributed computing. Why
so. Also, mention Luby’s algorithm and finally the new result due to Barenboim
et al. Also, mention faster algorithms  for special classes of graphs.
}

Symmetry breaking is a fundamental theme in distributed computing and a
classic example of symmetry breaking arises in the computation of a \textit{maximal independent set} (MIS) 
of a given graph. 
%Computing an MIS has also found applications to several
%problems in wireless and sensor networks, see e.g., \cite{GP04,LenzenWattenhoferDISC}. 
About 25 years ago Alon et al.~\cite{AlonBabaiItai} and Luby \cite{LubySICOMP86} independently
devised randomized algorithms for the MIS problem, running in $O(\log n)$ communication rounds.
Since then, all attempts to devise an algorithm for MIS that runs in \textit{sub-logarithmic} rounds 
(for general graphs) have failed. 
Recently, Kuhn et al.~\cite{KuhnMoscibrodaWattenhoferMAIN} proved that there exist
$n$-vertex graphs for which any distributed algorithm, even randomized, that solves the MIS problem requires $\Omega(\sqrt{\log n})$
communication rounds.
Closing this gap between the $O(\log n)$ upper bound and the $\Omega(\sqrt{\log n})$ lower
bound is one of the fundamental challenges in distributed computing.

There has been some exciting recent progress in closing this gap.
Barenboim et al.~\cite{BEPS12} present an algorithm that runs in $O(\log \Delta \sqrt{\log n})$ rounds
on $n$-vertex graphs with maximum degree $\Delta$.
This is sub-logarithmic for  $\Delta \in 2^{o(\sqrt{\log n})}$.
This result uses techniques developed in a paper by Kothapalli et al.~\cite{KOSS06}
for deriving an $O(\sqrt{\log n})$-round algorithm for computing an $O(\Delta)$-coloring
of a $n$-vertex graph with maximum degree $\Delta$.
Barenboim et al.~\cite{BEPS12} also present an algorithm for computing an MIS on trees
in $O(\sqrt{\log n \log\log n})$ rounds.
This is a small improvement over an algorithm from PODC 2011 for computing an MIS on trees due to
Lenzen and Wattenhofer \cite{LenzenWattenhofer} that runs in $O(\sqrt{\log n} \cdot \log\log n)$ rounds.
Barenboim et al.~extend their result on MIS on trees to graphs with girth at least 6 and to graphs
with bounded arboricity.

%have recently presented an $\tilde{O}(\sqrt{\log n})$ time algorithm. (For a
%function $f(n)$, we use
%$\tilde{O}(f(n))$ to denote any function in $O(f(n)\polylog(f(n)))$.) For graphs
%of bounded arboricity,  Barenboim et al. \cite{BEPS12} present an algorithm that
%has a runtime of $\min\{\log a\sqrt{\log n}+\log^{3/4} n, \log^2 \Delta +
%a\log\Delta+a^{\epsilon}\log\log n, \log^2 \Delta+a^{1+\epsilon}\log\Delta
%+\log a \log\log n\}$. The family of growth-bounded graphs are studied in
%\cite{WattenhoferSchneiderPODC2008,KMNWDISC05,GfellerPODC07}. 
%Similarly, a $O(\log^*
%n)$ round algorithm has been presented for growth bounded graphs
%\cite{WattenhoferSchneiderPODC2008}. A deterministic MIS algorithm for growth
%bounded graphs appears in \cite{KMNWDISC05}.

A problem closely related to MIS, that also involves symmetry breaking at its core,
is the problem of computing $t$-ruling sets. A \textit{$t$-ruling set} of
a graph $G = (V,E)$ is an independent subset $S$ of vertices with the property
that every vertex $v \in V$ is at a distance of at most $t$ from some vertex in
$S$. Thus an MIS is a 1-ruling set\footnote{In the definition of Gfeller and Vicari \cite{GfellerPODC07}, a $t$-ruling set need not be
independent, and what we call a $t$-ruling set, they call an \textit{independent} $t$-ruling set.}. 
In this paper we investigate the distributed complexity of the problem of
computing $t$-ruling sets for $t = O(1)$ with the aim of determining
whether an $O(1)$-ruling set can be computed more efficiently than
an MIS.
For general graphs and for various graph subclasses we show that it is indeed
possible to compute $t$-ruling sets, for small constant $t$, in time that
is much smaller than the best running time for a corresponding MIS
algorithm.
In our first result, we present an algorithm that computes a 2-ruling set in $O((\log n)^{3/4})$ rounds
on general graphs.
Thus we have a sub-logarithmic algorithm for a seemingly minor ``relaxation''
of the MIS problem.
We improve on this result substantially for trees, graphs of girth at least 6,
and graphs of bounded arboricity. For all these subclasses, we present algorithms for
computing 3-ruling sets whose runtime (in rounds) is exponentially faster than the
fastest corresponding MIS algorithms.
For example, for trees our algorithm computes a 3-ruling set in $O((\log \log n)^2 \cdot \log\log\log n)$
communication rounds, whereas the fastest algorithm for MIS on trees takes $O(\sqrt{\log n \log\log n})$
rounds \cite{BEPS12}.

Our work raises intriguing questions on the possibility of faster MIS algorithms
and on the separation between the distributed complexity of $O(1)$-ruling sets and MIS. 
For example, could we design algorithms for MIS that first compute a 2- or 3-ruling set
and then quickly convert that subset to a 1-ruling set?
Is it possible that there are MIS algorithms for trees and related graph subclasses
that run in $O(\mbox{poly}(\log \log n))$ rounds?
Alternately, could the MIS problem be strictly harder than the problem of computing
a $t$-ruling set for some small constant $t$?

%In \cite{KMNWDISC05}, the authors compute a $t$-ruling set for $t = O(\log 
%\Delta)$ for growth bounded graphs and then convert the same to an MIS. The
%algorithm of \cite{KMNWDISC05} is deterministic and runs in $O(\log \Delta \cdot
%\log^* n)$ rounds. Another work that uses techniques similar to ours is
%\cite{GfellerPODC07} which computes an $O(\log\log \Delta)$-ruling set for arbitrary
%graphs in $O(\log\log \Delta)$ rounds. 
%The best known upper bound on the runtime
%for computing a $O(\log\log \Delta)$-ruling set is therefore quite small
%compared even to the lower bound for MIS \cite{KuhnMoscibrodaWattenhoferPODC04}. 
%A fundamental question that is open since the work of \cite{GfellerPODC07} is whether
%$t$-ruling sets for very small $t$, even $t = O(1)$, can be computed faster than
%MIS. 

Our results should also be viewed in the context of results by Gfeller and Vicari 
\cite{GfellerPODC07}. These authors showed how to compute in $O(\log\log n)$ rounds a vertex-subset $T$ of a given $n$-vertex 
graph $G = (V, E)$ such that (i) every vertex is at most $O(\log \log n)$ hops from some
vertex in $T$ and (ii) the subgraph induced by $T$ has maximum degree $O(\log^5 n)$.
One can use the Barenboim et al.~$O(\log \Delta \sqrt{\log n})$-round MIS algorithm
on $G[T]$ and sparsify $T$ into an $O(\log\log n)$-ruling set in an additional $O(\sqrt{\log n} \cdot \log\log n)$ 
rounds.
Thus, by combining the Gfeller-Vicari algorithm with the Barenboim et al.~algorithm one can compute
an $O(\log\log n)$-ruling set in general graphs in $O(\sqrt{\log n} \cdot \log\log n)$ rounds.
Our result can be viewed as extending the Gfeller-Vicari result by using $t = O(1)$ instead
of $t = O(\log\log n)$.
Also worth noting is the fact that Gfeller and Vicari use their $O(\log\log n)$-ruling set
computation as an intermediate step to computing an MIS on \textit{growth-bounded graphs}.
While the techniques that work for growth-bounded graphs do not work for general graphs or
for the other graph subclasses we consider, this suggests the possibility of getting to an MIS
via a $t$-ruling set for small $t$.

Our technique involves a rapid sparsification of the graph while ensuring that nodes that 
are removed from further consideration are close (within one or two hops) to some remaining node. 
Using this technique we show how to reduce the degrees of graphs rapidly and after sufficiently
reducing the degrees, we can apply MIS algorithms due to Barenboim et al.~\cite{BEPS12}
that take advantage of the low maximum degree.
For example, given a graph $G = (V, E)$ and a parameter $\epsilon$, $0 < \epsilon < 1$, 
our sparsification procedure can run in $O\left( \frac{\log
\Delta}{(\log n)^\epsilon}\right)$ rounds
and partition $V$ into subsets $M$ and $W$ such that with high probability (i) $G[M]$ has maximum
degree $O(2^{(\log n)^\epsilon})$ and (ii) every vertex in $W$ has a neighbor in $M$.
At this stage, we can apply the MIS algorithm of Barenboim et al.~\cite{BEPS12} that runs
in $O(\log\Delta \cdot \sqrt{\log n})$ rounds on $G[M]$.
Since $\Delta(G[M]) = O(2^{(\log n)^\epsilon})$, this step takes $O((\log
n)^{1/2 + \epsilon})$ rounds,
leading to a 2-ruling set algorithm that runs in $O\left(\frac{\log \Delta}{(\log n)^\epsilon} + 
(\log n)^{1/2 + \epsilon}\right)$ rounds.
Picking $\epsilon = 1/4$ yields the $O((\log n)^{3/4})$ rounds 2-ruling set algorithm mentioned above.
We use a similar rapid sparsification approach to derive faster ruling set algorithms for 
different graph subclasses.
We believe that the sparsification technique may be of
independent interest in itself, especially in designing distributed approximation
algorithms that run in sub-logarithmic rounds. 

%We then rely on fast symmetry breaking in graphs of
%low degree to obtain low overall run time. The sparsification technique we use is
%similar in spirit to the one used by Gfeller and Vicari \cite{GfellerPODC07}. 
%However, to obtain a 3-ruling set, we have to use additional arguments that not only
%sparsify the graph but also ensure that nodes removed are at a distance of at
%most 3 from a member of the ruling set. Using faster algorithms for symmetry
%breaking in low degree or low diameter or small sized graphs is also a technique
%that is gaining popularity recently, as shown in \cite{KOSS06,BEPS12}.

\subsection{Model}
%CONGEST model, deg, N, , Chernoff, 

We consider distributed systems that can be modeled by a graph $G= (V,E)$ with
the vertices representing the computational entities and the edges 
representing communication links between pairs of computational entities. 
We use the standard synchronous, message passing model of communication in which
each node, in each round, can send a possibly distinct message along each incident
edge.
All of our algorithms are structured as a series of ``sparsification'' steps
interleaved with calls to subroutines implementing MIS algorithms on low degree
graphs, due to Barenboim et al.~\cite{BEPS12}.
During the sparsification steps, each node only needs to inform its neighbors
of its membership in some set and therefore each node only needs to send 
the same single bit to all of its neighbors.
Therefore, communication during the sparsification steps can be viewed as occuring in
in a fairly restrictive communication model in which each node is only allowed to 
(locally) broadcast a single bit to all neighbors.
However, some of the MIS algorithms in Barenboim et al. \cite{BEPS12} run
in the $\mathcal{LOCAL}$ model, which allows each node to send a message
of arbitrary size to each neighbor in each round.
Thus, due to their dependency on the MIS algorithms of Barenboim et al.~\cite{BEPS12},
the algorithms in this paper also require the use of the $\mathcal{LOCAL}$ model.

\subsection{Definitions and Notation}

Given a graph $G = (V,E)$, we denote by $N(v)$ the neighborhood of $v$
and by $\deg_G(v)$ the quantity $|N(v)|$. Let $\dist_G(u,v)$ refer to the shortest
distance between any two vertices $u$ and $v$ in $G$. For a subset of vertices
$V' \subseteq V$, let $G[V']$ be the subgraph induced by the
subset $V'$. 

Our calculations make use of Chernoff bounds for tail inequalities on the sum of
independent random variables.
In particular, let $X  := \sum_{i=1}^n\; X_i$ with $E[X_i] = p$ for each
$1\leq i\leq n$. 
The upper tail version of Chernoff bounds that we utilize is:
$\Pr[X \geq E[X]\cdot(1+\epsilon)] \leq \exp(-E[X]\epsilon^2/3)$ for
any $0 < \epsilon < 1$.

In our work, we derive a 3-ruling set algorithm for graphs with bounded
arboricity.
Let the \textit{density} of a graph $G = (V,E)$,
$|V|\ge 2$, be the ratio $\lceil |E|/(|V| - 1) \rceil$. 
Let the density of a single-vertex graph be 1. 
The \textit{arboricity} of a graph
$G = (V, E)$, denoted $a(G)$, can be defined as $a(G) := \max\{ density(G') \mid G' \mbox{ is a subgraph of } G\}$. 
By the celebrated Nash-Williams decomposition theorem \cite{NW64}, the arboricity of
a graph is exactly equal to the minimum number of forests
that its edge set can be decomposed into. 
For examples, trees have arboricity one.
The family of graphs with arboricity $a(G) = O(1)$  includes all planar
graphs, graphs with treewidth bounded
by a constant, graphs with genus bounded by a constant, and the family of graphs that exclude a
fixed minor.
A property of graphs with arboricity $a(G)$ that has
been found useful in distributed computing \cite{BE08,BE09,BE10} is
that the edges of such graphs can be oriented so that each node has at most
$a(G)$ incident edges oriented away from it. 
However, finding such an orientation takes $\Omega(\log n)$ time
\cite{BE08} and since
we are interested in sub-logarithmic algorithms, we cannot rely on the
availability of such an orientation.

\subsection{Our Results}

Here we summarize the results in this paper.
\begin{enumerate}
\item An algorithm, that with high probability, computes a 2-ruling set on general 
graphs in $O\left(\frac{\log \Delta}{(\log n)^\epsilon} + (\log n)^{1/2 + \epsilon}\right)$ rounds for
any $0 < \epsilon < 1$.
Substituting $\epsilon = 1/4$ into this running time expression simplifies it to $O((\log n)^{3/4})$.
\item  An algorithm, that with high probability, computes a 3-ruling 
set on graphs of girth at least 6 
in  $\exp(O(\sqrt{\log\log n}))$ rounds.
\item An algorithm, that with high probability, computes a 3-ruling set
in $O((\log\log n)^2 \log\log\log n)$ rounds on trees.
\item An algorithm, that with high probability, computes a 3-ruling set
on graphs of bounded arboricity in $O((\log\log n)^3)$ rounds. 
\end{enumerate}

Note that all our results run significantly faster than corresponding
algorithms for MIS. In fact, for trees and graphs of bounded arboricity, our
results improve the corresponding results exponentially. 
This is illustrated further in Table \ref{tab:results}. 
%Our results also have a
%very small increase in runtime compared to that of computing $O(\log\log\Delta)$
%ruling sets \cite{GfellerPODC07}.

\begin{table}[h]
\centering
\begin{tabular}{|l|l|l|l|}
\hline
Graph Class
& {{\bf MIS \cite{BEPS12}}} & {{\bf $O(\log\log n)$-ruling}}
& {\bf 3-ruling set}\\
& & \textbf{sets \cite{GfellerPODC07}} & [This Paper] \\
\hline
\hline
General &  $O(\log \Delta \cdot \sqrt{\log n})$ 
& $O(\sqrt{\log n}\cdot \log\log n)$ & $O((\log n)^{3/4})$  \\
\hline
Trees &    $\tilde{O}(\sqrt{\log n})$
&   & $\tilde{O}((\log\log n)^2)$  \\
\cline{1-2}
\cline{4-4}
%\hline
Girth $\geq 6$ &   $O(\log\Delta \log\log n + \exp(O(\sqrt{\log\log n})))$ &
& $\exp(O(\sqrt{\log\log n}))$ \\
\cline{1-2}
\cline{4-4}
%\hline
Bounded & $O(\log \Delta (\log \Delta + \frac{ \log\log n}{\log\log\log n}))$ &  & $O((\log\log n)^3)$ \\
arboricity   &  & & \\
($a = O(1)$) & & & \\
\cline{1-2}
\cline{4-4}
%\hline
\hline
\end{tabular}
\caption{Comparison of the best known runtimes of distributed algorithms for
MIS, $O(\log\log n)$-ruling sets, and 3-ruling sets.
It should be noted that the algorithm for general graphs described in this paper
computes a 2-ruling set.
Also, we use the notation $\tilde{O}(f(n))$ as a short form for
$O(f(n)\cdot \mbox{polylog}(f(n)))$.
}
\label{tab:results}
\end{table}

\subsection{Related Work}

%To fill with other related papers..May not refill with MIS works that are
%already mentioned earlier.
The work most closely related to ours, which includes the recent work of
Barenboim et al.~\cite{BEPS12} and the work of Gfeller and Vicari \cite{GfellerPODC07}, 
has already been reviewed earlier in this section.

Other work on the MIS problem that is worth mentioning is the elegant MIS
algorithm of M\'{e}tivier et al.~\cite{Metivieretal}.
In this algorithm, each vertex picks a real uniformly at random from the
interval $[0, 1]$ and joins the MIS if its chosen value is a local maxima.
This can be viewed as a variant of Luby's algorithm \cite{LubySICOMP86} and 
like Luby's algorithm, runs in $O(\log n)$ rounds.
Due to its simplicity, this MIS algorithm is used in part by the MIS algorithm
on trees by Lenzen and Wattenhofer \cite{LenzenWattenhofer} and also by
Barenboim et al.~\cite{BEPS12}.

The MIS problem on the class of growth-bounded graphs has attracted fair bit of attention
\cite{KMNWDISC05,GfellerPODC07,WattenhoferSchneiderPODC2008}.
Growth-bounded graphs have the property that the $r$-neighborhood of any vertex $v$
has at most $O(r^c)$ independent vertices in it, for some constant $c > 0$.
In other words, the rate of the growth of independent sets is polynomial in
the radius of the ``ball'' around a vertex.
Schneider and Wattenhofer \cite{WattenhoferSchneiderPODC2008} showed 
that there is a deterministic MIS algorithm on growth-bounded
graphs that runs in $O(\log^* n)$ rounds.
Growth-bounded graphs have been used to model wireless networks because the number of independent
vertices in any spatial region is usually bounded by the area or volume of that region.
In contrast to growth-bounded graphs, the graph subclasses we consider in this paper tend to
have arbitrarily many independent vertices in any neighborhood.

Fast algorithms for $O(1)$-ruling sets may have applications in distributed approximation
algorithms.
For example, in a recent paper by Berns et al.~\cite{BernsHegemanPemmaraju}
a 2-ruling set is computed as a way of obtaining a $O(1)$-factor approximation to the
metric facility location problem.
Our work raises questions about the existence of sub-logarithmic round algorithms
for problems such as minimum dominating set, vertex cover, etc., at least for
special graph classes.

\subsection{Organization of the Paper}
The rest of the paper is organized as follows. Section
\ref{section:generalGraphs} shows our
result for general graphs. Section \ref{section:girth5} shows our results for graphs
of girth at least 6, and for trees.  Section \ref{sec:arboricity} extends the 
results of Section \ref{section:girth5} to graphs of arboricity bounded by a 
poly-logarithmic value. The paper ends with some concluding remarks and open 
problems in Section \ref{sec:conclusions}.

\ignore{
2. Definition of ruling sets. MIS as a 1-ruling set. The question we want to
answer is: can we compute k-ruling sets for small k very fast?

3. Statement of our results. This should include definition of bounded
arboricity graphs.

4. Some discussion on the significance of our results.
}

\section{2-Ruling Sets in General Graphs}
\label{section:generalGraphs}

In this section we describe Algorithm \textsc{RulingSet-GG}, that runs in sub-logarithmic rounds
and computes a 2-ruling set in general graphs.
The reader is encouraged to consult the pseudocode of this algorithm while reading the following text.
Let $f$ be the quantity $2^{(\log n)^\epsilon}$ for some parameter $0 < \epsilon < 1$.
Let $i^*$ be the smallest positive integer such that $f^{i^* + 1} \ge \Delta$.
Thus $i^* = \lceil \log_f \Delta \rceil - 1$.
It is also useful to note that $i^* = O\left(\frac{\log \Delta}{(\log n)^\epsilon}\right)$.
The algorithm proceeds in \textit{stages} and there are $i^*$ stages, indexed by $i = 1, 2, \ldots, i^*$.
In Stage $i$, all ``high degree'' vertices, i.e., vertices with degrees greater than $\frac{\Delta}{f^i}$,
are processed.
Roughly speaking, in each stage we peel off from the ``high degree'' vertex set, a subgraph with degree
bounded above by $O(f \cdot \log n)$.
Following this we also peel off all neighbors of this subgraph.
More precisely, in Stage $i$ each ``high degree'' vertex joins a set $M_i$ with probability
$\frac{6 \log n \cdot f^i}{\Delta}$ (Line 6).
Later we will show (in Lemma \ref{lemma:degreeReductionGG}) that with high probability any vertex that is in $V$ at the start
of Stage $i$ has degree at most $\Delta/f^{i-1}$. (This is trivially true for $i = 1$.)
Therefore, it is easy to see that any vertex in the graph induced by $M_i$ has expected degree at most $O(f \cdot \log n)$.
In fact, this is true with high probability, as shown in Lemma \ref{lemma:degreeBound}.
This degree bound allows the efficient computation of an MIS on the subgraph induced by $M_i$.
Following the identification of the set $M_i$, all neighbors of $M_i$ that are outside 
$M_i$ are placed in a set $W_i$ (Line 9).
Both sets $M_i$ and $W_i$ are then deleted from the vertex set $V$.
The sets $W_i$ play a critical role in our algorithm.
For one, given the probability $\frac{6\log n \cdot f^i}{\Delta}$ of joining 
$M_i$, we can show that with high probability every ``high degree'' vertex 
ends up either in $M_i$ or in $W_i$.
This ensures that all ``high degree'' vertices are deleted from $V$ in each
Stage $i$.
Also, the sets $W_i$ act as ``buffers'' between the $M_i$'s ensuring that
there are no edges between $M_i$ and $M_{i'}$ for $i \not= i'$.
As a result the graph induced by $\cup_i M_i$ also has low degree, i.e.,
$O(f \cdot \log n)$.
Therefore, we can compute an MIS on the graph induced by $\cup_i M_i$ 
in ``one shot'' rather than deal with each of the graphs induced by
$M_1, M_2, \ldots$ one by one.

Given the way in which ``high degree'' vertices disappear from $V$,
at the end of all $i^*$ stages, the graph $G$ induced by vertices
that still remain in $V$ would have shrunk to the point where the maximum
degree of a vertex in $G$ is $O(f)$.
The algorithm ends by computing an MIS on the graph induced by $V \cup
(\cup_i M_i)$.
As mentioned before, the $M_i$'s do not interact with each other or
with $V$ and therefore the degree of the graph induced by $(\cup_i M_i) \cup V$ is $O(f \cdot \log n)$.
We use the MIS algorithm due of Barenboim et al.~\cite{BEPS12} that
runs in $O(\log \Delta \cdot \sqrt{\log n})$ rounds for this purpose.
Since $\Delta = O(f \cdot \log n)$ and $f= 2^{{(\log n)}^\epsilon}$,
this step runs in $O((\log n)^{\frac{1}{2} + \epsilon})$ rounds.
In the algorithm described below, we denote by \texttt{MIS-LOWDEG} the subroutine that implements the Barenboim et al.~algorithm.
We use $H$ to denote a static copy of the input graph $G$.
\begin{tabbing}
tabtab \= tab \= tab  \= tab \= tab \= \kill
\textbf{Algorithm} \textsc{RulingSet-GG}$(G = (V, E))$\\
1.\> $f \leftarrow 2^{(\log n)^\epsilon}$; $H \leftarrow G$\\
2.\> \textbf{for} $i \leftarrow 1,2,\ldots,i^*$ \textbf{do}\\
\>\>   \textbf{/* Stage $i$ */}\\
3.\>\> $M_i \leftarrow \emptyset$; $W_i \leftarrow \emptyset$;\\
4.\>\> \textbf{for} each $v \in V$ \textit{in parallel} \textbf{do}\\
5.\>\>\> \textbf{if} $\deg_G(v) > \frac{\Delta}{f^i}$ \textbf{then}\\
6.\>\>\>\> $M_i \leftarrow M_i \cup \{v\}$ with probability $\frac{6\log n \cdot f^i}{\Delta}$\\
7.\>\> \textbf{for} each $v \in V$ \textit{in parallel} \textbf{do}\\
8.\>\>\> \textbf{if} $v \in N(M_i) \setminus M_i$ \textbf{then}\\
9.\>\>\>\> $W_i \leftarrow W_i \cup \{v\}$\\ 
10.\>\> $V \leftarrow V \setminus (M_i \cup W_i)$\\
\> \textbf{end-for}($i$) \\
11.\>$I \leftarrow \texttt{MIS-LOWDEG}(H[(\cup_i M_i) \cup V])$\\
\>\textbf{return} $I$; \\
\end{tabbing}

\begin{lemma}
\label{lemma:degreeReductionGG}
At the end of Stage $i$, $1 \le i \le i^*$, with probability at least $1 - \frac{1}{n^{5}}$ all vertices still in $V$ have degree at most $\frac{\Delta}{f^i}$.
\end{lemma}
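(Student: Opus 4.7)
The plan is to proceed by induction on $i$, with the inductive hypothesis that at the start of Stage $i$ every vertex still in $V$ has degree at most $\Delta/f^{i-1}$; the base case $i=1$ is immediate since the input graph has maximum degree $\Delta = \Delta/f^0$. For the inductive step, I would reduce the Lemma to the claim that every vertex $v$ with $\deg_G(v) > \Delta/f^i$ at the start of Stage $i$ is removed from $V$ during Stage $i$ (i.e., placed in $M_i \cup W_i$) with probability at least $1 - n^{-6}$. A union bound over the at most $n$ such high-degree vertices then gives overall failure probability at most $n^{-5}$, and because deletions only decrease degrees, every surviving vertex automatically has degree at most $\Delta/f^i$ at the end of Stage $i$.

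For the per-vertex argument, fix a vertex $v$ with $\deg_G(v) > \Delta/f^i$. Observe that $v$ fails to be removed precisely when (i) $v$ itself does not join $M_i$ and (ii) no neighbor of $v$ joins $M_i$ (the latter being exactly the condition for $v \notin W_i$ given (i)). Because only eligible vertices, i.e., those whose current degree exceeds $\Delta/f^i$, attempt to join $M_i$ at all, and they do so independently with probability $p := 6 \log n \cdot f^i / \Delta$, this failure probability equals $(1-p)^k$, where $k$ is the number of eligible vertices in $\{v\} \cup N(v)$. Provided one can show $k \ge \Delta/f^i$, then
\[
(1-p)^k \;\le\; \exp(-kp) \;\le\; \exp(-6 \log n) \;=\; n^{-6},
\]
which is the desired per-vertex bound, and the union bound described above completes the proof.

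The main obstacle is justifying the lower bound $k \ge \Delta/f^i$, because the bare hypothesis $\deg_G(v) > \Delta/f^i$ does not immediately guarantee that enough of $v$'s neighbors also exceed the threshold (a high-degree $v$ could in principle be surrounded mostly by below-threshold neighbors). I expect this to be handled by combining the inductive degree bound at the start of Stage $i$ (which controls how $v$'s neighborhood can be distributed across thresholds) with the structural effect of the previous stages' peeling, so that the number of eligible trials incident to $v$ is at least $\Delta/f^i$. Once this count is in hand, the Chernoff-style estimate and union bound above are routine.
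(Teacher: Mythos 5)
Your reduction and your calculation track the paper's proof of Lemma \ref{lemma:degreeReductionGG} exactly: the paper likewise argues, per vertex $v$ of degree more than $\Delta/f^i$ at the start of Stage $i$, that $\Pr[v \in M_i \cup W_i] \ge 1 - \left(1 - \frac{6\log n \cdot f^i}{\Delta}\right)^{\Delta/f^i} \ge 1 - n^{-6}$, and then finishes with a union bound over at most $n$ vertices. However, the step you explicitly flag as the main obstacle --- establishing that there are at least $k \ge \Delta/f^i$ eligible independent trials incident to $v$ --- is precisely the step the paper elides: it writes the bound $(1-p)^{\Delta/f^i}$ as though every one of $v$'s more than $\Delta/f^i$ neighbors flips a coin, whereas in Algorithm \textsc{RulingSet-GG} (Lines 5--6) only vertices whose \emph{own} current degree exceeds $\Delta/f^i$ attempt to join $M_i$. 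So you have reconstructed the paper's argument in full, and the paper contains no additional idea that closes the gap you identified; your honesty about the missing count is the only difference between your write-up and the paper's.

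Moreover, the repair you hope for cannot work as described. The inductive hypothesis is an \emph{upper} bound (every surviving vertex has degree at most $\Delta/f^{i-1}$) and therefore gives no lower bound on the number of above-threshold vertices in $\{v\} \cup N(v)$; nor does the peeling of earlier stages help, since below-threshold vertices never sample and are deleted only when adjacent to some $M_j$. Concretely, consider a vertex $v$ of degree $\Delta/f^i + 1$ all of whose neighbors are leaves (with the maximum degree $\Delta$ realized elsewhere in the graph). In stages $1, \ldots, i-1$ neither $v$ nor its leaves is eligible to sample and none is adjacent to a sampler, so the star survives intact with probability $1$; in Stage $i$ the only eligible vertex in $\{v\} \cup N(v)$ is $v$ itself, so $k = 1$ and $v$ is removed only with probability $\frac{6\log n \cdot f^i}{\Delta}$, which is $o(1)$ whenever $f^i \ll \Delta/\log n$ --- far from the claimed $1 - n^{-6}$ --- and the leaves (hence $v$'s degree) are untouched unless $v$ itself joins $M_i$. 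Thus the per-vertex claim, and with it the lemma for the algorithm exactly as written, cannot be completed along this route. The estimate does become valid if the degree test in Line 5 is dropped so that \emph{every} vertex of $V$ samples into $M_i$ with probability $\frac{6\log n \cdot f^i}{\Delta}$: then $k \ge 1 + \deg_G(v) > \Delta/f^i$ as you want, while the expected-degree computation in Lemma \ref{lemma:degreeBound} is unaffected since it only uses the upper bound $\deg_G(v) \le \Delta/f^{i-1}$. This is, in effect, how the high-girth algorithm \textsc{RulingSet-HG} sidesteps the same difficulty, by letting low-degree vertices sample into $M_2$.
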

\begin{proof}
Consider a ``high degree'' vertex $v$, i.e., a vertex with degree more than $\Delta/f^i$, at the start of Stage $i$.
Then,
\begin{eqnarray*}
\Pr[v\mbox{ is added to }M_i \cup W_i] & \ge & 1 - \left(1 - \frac{6 \log n \cdot f^i}{\Delta}\right)^{\frac{\Delta}{f^i}}\\
                                              & \ge & 1 - e^{-6 \cdot \log n} = 1 - \frac{1}{n^6}
\end{eqnarray*}
Therefore, using the union bound, we see that with probability at least $1 - \frac{1}{n^{5}}$ all vertices in $V$ 
that have degree more than $\Delta/f^i$ at the start of Stage $i$ will join $M_i \cup W_i$ in Stage $i$.
\end{proof}

\begin{lemma}
\label{lemma:degreeBound}
Consider a Stage $i$, $1 \le i \le i^*$.
With probability at least $1 - \frac{2}{n}$, the subgraph induced by $M_{i}$ (i.e., $H[M_i]$) has maximum degree $12\log n \cdot f$.
\end{lemma}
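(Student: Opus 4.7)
The plan is to apply a standard Chernoff bound after conditioning on the high-probability event guaranteed by Lemma \ref{lemma:degreeReductionGG}. The key observation is that membership of each vertex in $M_i$ is determined by an independent Bernoulli trial with known parameter, and the number of such trials that can influence a given vertex's degree in $H[M_i]$ is bounded by its degree in the current $V$ at the start of Stage $i$.

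First, I would condition on the event $E_{i-1}$ that at the end of Stage $i-1$ (equivalently, at the start of Stage $i$) every vertex still in $V$ has degree at most $\Delta/f^{i-1}$; by Lemma \ref{lemma:degreeReductionGG} (or trivially, when $i = 1$) this holds with probability at least $1 - 1/n^5$. Now fix an arbitrary vertex $v$, and let $X_v$ denote the number of neighbors of $v$ that land in $M_i$. Only neighbors of $v$ currently in $V$ can be placed in $M_i$, and each such neighbor enters $M_i$ independently with probability $p = 6 \log n \cdot f^i / \Delta$. Under $E_{i-1}$, there are at most $\Delta/f^{i-1}$ such neighbors, so
\[
\mathbb{E}[X_v] \;\le\; \frac{\Delta}{f^{i-1}} \cdot \frac{6 \log n \cdot f^i}{\Delta} \;=\; 6 \log n \cdot f.
\]

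Next, I would apply the upper-tail Chernoff bound stated in the preliminaries with $\epsilon = 1$ to obtain
\[
\Pr\bigl[X_v \ge 12 \log n \cdot f\bigr] \;\le\; \exp\!\left(-\frac{6 \log n \cdot f}{3}\right) \;=\; n^{-2f} \;\le\; n^{-2},
\]
since $f \ge 1$. A union bound over all $n$ vertices then shows that, conditioned on $E_{i-1}$, every vertex has at most $12 \log n \cdot f$ neighbors in $M_i$ with probability at least $1 - 1/n$. In particular, this bounds the maximum degree of $H[M_i]$, since the degree of any $v \in M_i$ in $H[M_i]$ is at most $X_v$.

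Finally, combining the failure probability $1/n^5$ of $E_{i-1}$ with the conditional failure probability $1/n$, the total failure probability is at most $2/n$, giving the claimed bound. The only subtlety I anticipate is justifying that "neighbors not currently in $V$" cannot contribute to $X_v$, which follows immediately from the fact that $M_i$ is selected only from the current $V$ (Line 5 of the pseudocode); no genuine obstacle arises since the independence of the Bernoulli trials and the crude degree bound from Lemma \ref{lemma:degreeReductionGG} are exactly what the Chernoff bound requires.
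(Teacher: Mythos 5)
Your proposal is correct and follows essentially the same route as the paper's own proof: condition on the degree bound from Lemma \ref{lemma:degreeReductionGG}, bound the expected number of neighbors joining $M_i$ by $6\log n \cdot f$, apply the upper-tail Chernoff bound with $\epsilon = 1$, take a union bound over vertices, and then remove the conditioning to accumulate the $2/n$ failure probability. Your explicit remarks (that only neighbors still in $V$ can join $M_i$, and that the Stage-$i$ coin flips are independent of the conditioning event) are sound and, if anything, slightly more careful than the paper's write-up.
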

\begin{proof}
We condition on the event that all vertices that are in $V$ at the beginning
of Stage $i$ have degree at most $\frac{\Delta}{f^{i-1}}$.
For $i = 1$, this event happens with probability 1 and for $i > 1$,
Lemma \ref{lemma:degreeReductionGG} implies that this event happens
with probability at least $1 - 1/n^{5}$.
Consider a vertex $v \in V$ that is added to $M_i$.
Let $\deg_{M_i}(v)$ denote the degree of vertex $v$ in $H[M_i]$.
Then,
$$E[\deg_{M_i}(v)] \le \frac{\Delta}{f^{i-1}} \cdot \frac{6\log n \cdot f^i}{\Delta} = 6 \log n \cdot f.$$
Here we use the fact that $\deg_G(v) \le \frac{\Delta}{f^{i-1}}$ for all $v \in V$ at the start of Stage $i$.
Since vertices join $M_i$ independently, using Chernoff bounds we conclude that
$\Pr[\deg_{M_i}(v) \ge 12 \log n \cdot f] \le 1/n^2$.
Therefore, with probability at least $1 - 1/n$ the maximum degree of $H[M_i]$
is at most $12 \log n \cdot f$.
We now drop the conditioning on the event that all vertices that are in $V$ at
the beginning of Stage $i$ have degree at most $\frac{\Delta}{f^{i-1}}$ and use
Lemma \ref{lemma:degreeReductionGG} and the union bound to obtain the lemma.
\end{proof}

\begin{theorem}
\label{theorem:finalGG}
Algorithm \textsc{RulingSet-GG} computes a 2-ruling set of the input graph $G$ in
$O(\frac{\log \Delta}{(\log n)^\epsilon} + (\log n)^{1/2 + \epsilon})$
rounds.
\end{theorem}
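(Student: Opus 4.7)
The plan is to establish correctness (that $I$ is both independent and 2-dominating) and then bound the round complexity, reducing each to properties that either follow directly from the algorithm's structure or from Lemmas~\ref{lemma:degreeReductionGG} and~\ref{lemma:degreeBound}.

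For correctness, independence is immediate since $I$ is returned as an MIS of a subgraph of $H = G$. For the 2-ruling property, I would partition the original vertex set into two classes. Any vertex in $(\cup_i M_i) \cup V$ (where $V$ now denotes the residual set after all stages) is at distance at most 1 from $I$ by the MIS property on Line 11. Any vertex $v \in \cup_i W_i$ was placed in some $W_i$ precisely because $v \in N(M_i) \setminus M_i$, so $v$ has a neighbor $u \in M_i \subseteq (\cup_j M_j) \cup V$, which is itself within distance 1 of $I$, so $\dist_G(v, I) \le 2$. This covers all of $V(G)$ since every vertex that is removed from the active set is placed in some $M_i \cup W_i$.

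For the runtime, each stage performs only constant-round local operations (degree check, coin flip, one round to identify $N(M_i) \setminus M_i$, and a deletion), and there are $i^* = O(\log \Delta / (\log n)^\epsilon)$ stages, giving the first term. The second term comes from bounding $\Delta(H[(\cup_i M_i) \cup V])$ and invoking \texttt{MIS-LOWDEG}. The main technical step here, and the one I expect to be the subtlest, is arguing that this graph has maximum degree $O(f \log n)$. Lemma~\ref{lemma:degreeBound} handles edges within a single $M_i$. For vertices in the residual $V$, Lemma~\ref{lemma:degreeReductionGG} applied with $i = i^*$ and the choice $f^{i^*+1} \ge \Delta$ gives degree at most $f$. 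The key structural observation is that the induced subgraph has no edges between $M_i$ and $M_j$ for $i \ne j$, and no edges between any $M_i$ and the residual $V$: if $u \in M_i$ and $v$ is a neighbor of $u$ that survives stage $i$, then $v$ would have been added to $W_i$ on Line~9 and deleted on Line~10, contradicting $v \in M_j \cup V_{\mathrm{residual}}$ for any later stage. Hence the max degree of $H[(\cup_i M_i) \cup V]$ is $O(f \log n) = O(2^{(\log n)^\epsilon} \log n)$, and the Barenboim et al.\ MIS algorithm runs in $O(\log(f \log n) \cdot \sqrt{\log n}) = O((\log n)^{1/2 + \epsilon})$ rounds on it.

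Finally, for the high-probability guarantee, I would take a union bound over the $i^* = O(\log n)$ stages of the failure events in Lemmas~\ref{lemma:degreeReductionGG} and~\ref{lemma:degreeBound} (each of probability $O(1/n)$ up to constants), together with the high-probability correctness of the subroutine \texttt{MIS-LOWDEG}, to conclude that the entire algorithm succeeds with probability $1 - o(1)$, which by adjusting constants in the Chernoff bounds can be sharpened to $1 - 1/\poly(n)$.
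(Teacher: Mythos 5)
Your proposal is correct and follows essentially the same route as the paper's proof: $O(1)$ rounds per stage times $i^*$ stages for the first term, the degree bound $O(f \log n)$ on $H[(\cup_i M_i) \cup V]$ via Lemmas~\ref{lemma:degreeReductionGG} and~\ref{lemma:degreeBound} plus the ``no edges between $M_i$, $M_j$, and residual $V$'' observation for the \texttt{MIS-LOWDEG} term, and the $W_i$-buffer argument for the 2-ruling property. If anything, you spell out two points the paper leaves implicit --- why the $W_i$ deletions forbid cross edges, and the union bound over stages --- so nothing is missing.
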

\begin{proof}
It is easy to see that every stage of the algorithm runs in $O(1)$ 
communication rounds.
Since there are $i^*$ stages and since $i^* = O\left(\frac{\log \Delta}{(\log n)^\epsilon}\right)$, the running time of the stages all together is
$O\left(\frac{\log \Delta}{(\log n)^\epsilon}\right)$.
From Lemma \ref{lemma:degreeReductionGG} we see that the vertex set $V$ remaining after all $i^*$ stages induces a graph with maximum degree $f$
with high probability.
From Lemma \ref{lemma:degreeBound} we see that the maximum degree of every $H[M_i]$ is bounded above by $O(f \cdot \log n)$ with high probability.
Furthermore, since there is no interaction between any pair of $M_i$'s and also between $V$ and the $M_i$'s,
the maximum degree of the graph induced by $(\cup_i M_i) \cup V$ is also $O(f \cdot \log n)$.
Therefore, with high probability, the MIS computation at the end of the algorithm takes 
$O((\log n)^{1/2+\epsilon})$ rounds using \cite[Theorem 4.3]{BEPS12}.
Together these observations yield the claimed running time.

To see that $I$ is a 2-ruling set, first observe that every vertex $v$
ends up in $M_i \cup W_i$ for some $1 \le i \le i^*$ or remains in $V$ until the end.
If $v$ ends up in $W_i$, it is at most 2 hops from a vertex in $I$ that
belongs to the MIS of $H[M_i]$.
Otherwise, $v$ is 1 hop away from a vertex in $I$.
\end{proof}

Using $\epsilon = 1/4$ in the above theorem results in 
Corollary \ref{cor:simple}. A further optimization on the choice of
$\epsilon$ for graphs with degree in $2^{\omega(\sqrt{\log n}})$ is
shown in Corollary \ref{corollary:finalGG}.

\begin{corollary}
\label{cor:simple}
Algorithm \textsc{RulingSet-GG} computes a 2-ruling set of the input
graph $G$ in $O((\log n)^{3/4})$ rounds.
\end{corollary}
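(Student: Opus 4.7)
The plan is to derive this corollary as a direct substitution into Theorem~\ref{theorem:finalGG}. That theorem asserts that Algorithm \textsc{RulingSet-GG} runs in $O\!\left(\frac{\log \Delta}{(\log n)^\epsilon} + (\log n)^{1/2+\epsilon}\right)$ rounds and correctly produces a 2-ruling set, for any choice of $\epsilon \in (0,1)$. So there is no algorithmic work left to do — I just need to plug in the right value of $\epsilon$ and simplify.

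First I would fix $\epsilon = 1/4$ in Theorem~\ref{theorem:finalGG}. The second term then becomes $(\log n)^{1/2 + 1/4} = (\log n)^{3/4}$ immediately. To bound the first term, I would use the trivial inequality $\Delta \le n$, so that $\log \Delta \le \log n$ and hence
\[
\frac{\log \Delta}{(\log n)^{1/4}} \;\le\; \frac{\log n}{(\log n)^{1/4}} \;=\; (\log n)^{3/4}.
\]
Adding the two contributions gives the claimed $O((\log n)^{3/4})$ bound.

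The correctness part (that $I$ is a 2-ruling set) carries over verbatim from Theorem~\ref{theorem:finalGG} and requires no additional argument. There is really no obstacle here; the only thing worth being careful about is the worst-case dependence on $\Delta$, which is why I explicitly bound $\log \Delta$ by $\log n$ rather than leaving it implicit. (In fact, as the authors note in the lead-in to Corollary~\ref{corollary:finalGG}, one can choose $\epsilon$ more cleverly as a function of $\Delta$ to get a better bound when $\Delta \in 2^{o(\sqrt{\log n})}$, but for the worst-case statement of Corollary~\ref{cor:simple} the uniform choice $\epsilon = 1/4$ already suffices.)
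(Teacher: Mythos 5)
Your proposal is correct and matches the paper's own (one-line) derivation exactly: the paper obtains Corollary~\ref{cor:simple} by substituting $\epsilon = 1/4$ into Theorem~\ref{theorem:finalGG}, with the bound $\log \Delta \le \log n$ left implicit. Your only addition is making that implicit bound explicit, which is a harmless clarification rather than a different route.
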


\begin{corollary}
\label{corollary:finalGG}
(i) For a graph $G$ with $\Delta = 2^{O(\sqrt{\log n})}$, Algorithm \textsc{RulingSet-GG} computes a 2-ruling set of the input graph $G$ in
$O((\log n)^{1/2 + \epsilon})$ rounds for any $\epsilon > 0$.
(i) For a graph $G$ with $\Delta = 2^{\omega(\sqrt{\log n})}$, Algorithm \textsc{RulingSet-GG} computes a 2-ruling set of the input graph $G$ in
$O((\log n)^{1/4}\sqrt{\log \Delta})$ rounds.
\end{corollary}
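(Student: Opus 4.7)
The plan is to substitute two different choices of $\epsilon$ into the two-term bound $O\bigl(\frac{\log \Delta}{(\log n)^\epsilon} + (\log n)^{1/2+\epsilon}\bigr)$ from Theorem~\ref{theorem:finalGG}, tailored to the two regimes of $\Delta$. So both parts follow from a direct optimization; nothing new has to be proven about the algorithm itself.

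For part (i), when $\Delta = 2^{O(\sqrt{\log n})}$, I would first observe that $\log \Delta = O(\sqrt{\log n})$, so the first term becomes $O\bigl((\log n)^{1/2}/(\log n)^\epsilon\bigr) = O\bigl((\log n)^{1/2-\epsilon}\bigr)$. For any fixed $\epsilon$ with $0 < \epsilon < 1/2$ (we may assume $\epsilon$ is small, else the claim only gets easier), this is asymptotically dominated by the second term $(\log n)^{1/2+\epsilon}$. Invoking Theorem~\ref{theorem:finalGG} with that $\epsilon$ therefore gives the stated $O((\log n)^{1/2+\epsilon})$ bound.

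For part (ii), when $\Delta = 2^{\omega(\sqrt{\log n})}$, a constant $\epsilon$ no longer balances the two terms, so I would pick $\epsilon$ as a function of $n$ and $\Delta$ that equalizes them. Setting $\frac{\log \Delta}{(\log n)^\epsilon} = (\log n)^{1/2+\epsilon}$ yields $\log \Delta = (\log n)^{1/2 + 2\epsilon}$, i.e.,
\[
\epsilon \;=\; \tfrac{1}{2}\log_{\log n}(\log \Delta) \;-\; \tfrac{1}{4}.
\]
Before plugging this back, I would verify that this value lies in the admissible range $(0,1)$ required by Theorem~\ref{theorem:finalGG}: the lower bound $\epsilon > 0$ follows from $\log \Delta = \omega(\sqrt{\log n})$, which implies $\log \Delta > (\log n)^{1/2}$ for all sufficiently large $n$; the upper bound $\epsilon < 1$ follows from the trivial inequality $\log \Delta \le \log n$, so $\log_{\log n}(\log \Delta) \le 1$.

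Substituting this choice of $\epsilon$ into the second term and using the identity $(\log n)^{\log_{\log n}(x)} = x$, I would compute
\[
(\log n)^{1/2+\epsilon} \;=\; (\log n)^{1/4}\cdot (\log n)^{(1/2)\log_{\log n}(\log \Delta)} \;=\; (\log n)^{1/4}\,\sqrt{\log \Delta},
\]
matching the claimed bound. Since the two terms were chosen to be equal, the first term contributes the same asymptotic quantity. The only real care needed is checking admissibility of $\epsilon$ and applying the log-change-of-base identity correctly; there is no further obstacle, since all probabilistic and algorithmic content is already packaged in Theorem~\ref{theorem:finalGG}.
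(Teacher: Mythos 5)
Your proposal is correct and matches the paper's own proof: part (i) by direct substitution, and part (ii) by equating the two terms to get $\epsilon = \frac{\log\log\Delta}{2\log\log n} - \frac{1}{4}$ (your $\frac{1}{2}\log_{\log n}(\log\Delta) - \frac{1}{4}$ is the same value after change of base) and plugging back in to obtain $O((\log n)^{1/4}\sqrt{\log\Delta})$. Your explicit verification that this $\epsilon$ lies in $(0,1)$ using $\log\Delta = \omega(\sqrt{\log n})$ and $\log\Delta \le \log n$ is slightly more careful than the paper, which only notes these bounds in passing.
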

\begin{proof}
We get (i) by simply plugging $\Delta = 2^{O(\sqrt{\log n})}$ into the running time expression from Theorem \ref{theorem:finalGG}.
(ii) In this case, we know that $\log \Delta = \omega(\sqrt{\log n})$ and $\log \Delta \le \log n$.
Consider the two expressions $\frac{\log \Delta}{(\log n)^\epsilon}$ and $(\log n)^{1/2 + \epsilon}$ in the running time
expression from Theorem \ref{theorem:finalGG}.
At $\epsilon = 0$ the first term is larger and as we increase $\epsilon$, the first term falls and
the second term increases.  
By the time $\epsilon = 1/4$ the second term is larger.
We find a minimum value by equating the two terms and solving for $\epsilon$.
This yields an ``optimal'' value of 
$$\epsilon = \frac{\log \log \Delta}{2 \log \log n} - \frac{1}{4}$$
and plugging this into the running time expression yields the running time bound of $O((\log n)^{1/4} \cdot \sqrt{\log \Delta})$ rounds.
\end{proof}

\section{3-Ruling Sets for High Girth Graphs and Trees}
\label{section:girth5}

Our goal in this section is to devise an $O(1)$-ruling set algorithm for high girth graphs
and trees that is much faster than the 2-ruling set algorithm for general graphs from the
previous section.
In Algorithm \textsc{RulingSet-GG} we allow the graph induced by $M_i$ to have
degree as high as $O(f \cdot \log n)$ where $f = 2^{(\log n)^{\epsilon}}$. 
Computing an MIS on a graph with degree as high as this is too time consuming for our purposes.
We could try to reduce $f$, but this will result in a corresponding increase in the number of 
stages.
Therefore, we need to use additional ideas to help simultaneously keep
the maximum degree of the graphs $H[\cup_i M_i]$ small and also the number
of stages small.

%We consider graphs whose girth is at least 6. This ensures that the two
%neighborhood of each node is an independent set. We still retain the sets $M$
%and $W$ with their meanings as in the previous section. We process nodes of degree
%in the range $(\Delta^{1/2^i},\Delta^{1/2^{i-1}}]$ for $i=1,2,\cdots,$ in each
%iteration. In the $i$th iteration, we will call these nodes as {\em high $i$-degree}
%nodes, and the rest as {\em low $i$-degree} nodes. In the $i$th iteration, we wish
%to ensure that all high degree nodes either join $M$ or has a neighbor in $M$, 
%with high probability, at the same time keeping the degree of $G[M]$ low.

%To achieve these twin objectives, we allow also the low $i$-degree nodes to join
%$M$ with a small probability. But, crucially, if a node has high $i$-degree and 
%also enough neighbors of high $i$-degree, then such a node is in $M \cup W$ with
%high probability due to the large size of its two neighborhood. This holds
%irrespective of the degrees of nodes in the two neighborhood.  In order
%for the algorithm to work as desired, it is therefore crucial to set
%the probabilities for high $i$-degree and low $i$-degree nodes to join
%$M$ appropriately. In the following, we make these ideas more precise.

Let $G = (V, E)$ be a graph with $n$ vertices, maximum degree $\Delta$, and girth 
at least 6.  Let $i^*$ be the smallest positive integer such that
$\Delta^{1/2^{i^*}} \le  6 \cdot \log n$.  It is easy to check that $i^* =
O(\log \log \Delta)$.

Let $M_1$ and $M_2$ be disjoints subsets of $V$ such that the maximum vertex degree in
$G[M_1]$ and in $G[M_2]$ is bounded by $O(\log n)$.
We use $\texttt{MIS-TWOSTAGE}(G, M_1, M_2)$ to denote a call to the following algorithm for computing
an MIS on $G[M_1 \cup M_2]$.
\begin{enumerate}
\item Compute an MIS $I_1$ on $G[M_1]$ using the algorithm of Barenboim et al. (\cite{BEPS12}, Theorem 7.2).
\item Compute an MIS $I_2$ on $G[M_2 \setminus N(I_1)]$ using the algorithm of Barenboim et al. (\cite{BEPS12}, Theorem 7.2).
\item return $I_1 \cup I_2$.
\end{enumerate}
This algorithm runs in $\exp(O(\sqrt{\log\log n}))$ rounds since
the maximum degree in $G[M_1]$ and in $G[M_2]$ is bounded by $O(\log n)$
and therefore by Theorem 7.2 \cite{BEPS12} each of the MIS computations requires 
$\exp(O(\sqrt{\log\log n}))$ rounds.
If $G$ were a tree, then we could use Theorem 7.3 in Barenboim et al. \cite{BEPS12},
which tells us that we can compute an MIS on a tree with maximum degree $O(\log n)$ in
$O(\log\log n \cdot \log\log\log n)$ rounds.
From this we see that a call to $\texttt{MIS-TWOSTAGE}(G, M_1, M_2)$ runs in $O(\log\log n \cdot \log\log\log n)$
rounds when $G$ is a tree.

In our previous algorithm, Algorithm \textsc{RulingSet-GG}, we used
degree ranges $(\frac{\Delta}{f}, \Delta]$, $(\frac{\Delta}{f^2}, \frac{\Delta}{f}]$, etc.
Here we use even larger degree ranges: $(\Delta^{1/2}, \Delta]$, $(\Delta^{1/4}, \Delta^{1/2}]$, etc.
The algorithm proceeds in stages and in  Stage $i$ all vertices with degrees in the range $(\Delta^{1/2^i}, \Delta^{1/2^{i-1}}]$
are processed.
To understand the algorithm and why it works consider what happens in Stage
1. (It may be helpful to consult the pseudocode of Algorithm
\textsc{RulingSet-HG} while reading the following.)
In Line 6 we allow ``high degree'' vertices (i.e., those with degree more than $\sqrt{\Delta}$) 
to join a set $M_1$ with a probability $\frac{6\log n}{\Delta}$. 
This probability is small enough that it ensures that the expected maximum
degree of the subgraph induced by $M_1$ is $O(\log n)$.
In fact, this also holds with high probability, as shown in Lemma \ref{lemma:Mdegree}. 
However, as can be seen easily, there are lots of ``high degree'' vertices
that have no neighbor in $M_1$.
We use two ideas to remedy this situation.
The first idea is to allow ``low degree'' vertices (i.e., those with degree
at most $\sqrt{\Delta}$) also to join a set $M_2$, with the somewhat
higher probability of $\frac{6 \log n}{\sqrt{\Delta}}$ (Line 7).
This probability is low enough to ensure that the graph induced by
$M_2$ has $O(\log n)$ maximum degree, but it is also high enough to
ensure that if a ``high degree'' node has lots of ``low degree'' neighbors, 
it will see some neighbor in $M_2$, with high probability.
This still leaves untouched ``high degree'' vertices with lots of 
``high degree'' neighbors. 
To deal with these vertices, we remove not just the neighborhood of
$M_1$, but also the 2-neighborhood of $M_1$.
The fact that $G$ has a high girth ensures that a ``high degree'' vertex
that has many ``high degree'' neighbors has lots of vertices in its
2-neighborhood.
This allows us to show that such ``high degree'' vertices are also removed
with high probability. The above arguments are formalized in Lemma
\ref{lemma:activeDegreeReduction}.
We repeat this procedure for smaller degree ranges
until the degree of the graph that remains is
poly-logarithmic. Figure \ref{figure:mw} shows one iteration of the
algorithm. Pseudocode of our algorithm appears as Algorithm
\textsc{RulingSet-HG} below.

\begin{tabbing}
tabtab \= tab \= tab  \= tab \= tab \= \kill
\textbf{Algorithm} \textsc{RulingSet-HG}$(G = (V, E))$\\
1.\> $I \leftarrow \emptyset$\\
2.\> \textbf{for} $i=1,2,\cdots,i^*$ \textbf{do}\\ 
\> \textbf{/* Stage $i$ */}\\
3.\>\> $M_1 \leftarrow \emptyset$; $M_2 \leftarrow \emptyset$; $W \leftarrow \emptyset$\\
4.\>\> \textbf{for} $v \in V$ \textit{in parallel} \textbf{do}\\
5.\>\>\> \textbf{if} $\deg(v) > \Delta^{1/2^i}$ \textbf{then}\\
6.\>\>\>\> $M_1 \leftarrow M_1 \cup \{v\}$ with probability $\frac{6 \cdot \log n}{\Delta^{1/2^{i-1}}}$\\
\>\>\> \textbf{else if} $\deg(v) \le \Delta^{1/2^i}$ \textbf{then}\\
7. \>\>\>\> $M_2 \leftarrow M_2 \cup \{v\}$ with probability $\frac{6 \cdot \log n}{\Delta^{1/2^i}}$\\
8.\>\> $I \leftarrow I \cup \texttt{MIS-TWOSTAGE}(G, M_1, M_2)$\\
9.\>\> \textbf{for} $v \in V \setminus (M_1 \cup M_2)$ \textit{in parallel} \textbf{do}\\
10.\>\>\> \textbf{if} $\mbox{dist}(v, M_1 \cup M_2) \le 2$ \textbf{then}\\
11.\>\>\>\> $W \leftarrow W \cup \{v\}$\\ 
12.\>\> $V \leftarrow V \setminus (M_1 \cup M_2 \cup W)$\\
\> \textbf{end-for}($i$) \\
13.\>$I \leftarrow I \cup \texttt{MIS}(G)$\\
\>\textbf{return} $I$; \\
\end{tabbing}

\begin{figure}
\centering
\epsfig{file=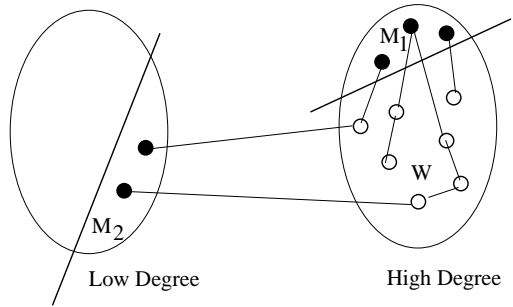,scale=0.6}
\caption{Figure showing one iteration of  Algorithm \textsc{RulingSet-HG}.
The figure shows the sets $M_1$, $M_2$ and $W$. 
}
\label{figure:mw}
\end{figure}

In the following, we analyze Algorithm \textsc{RulingSet-HG}. We show in
Lemma \ref{lemma:activeDegreeReduction} that all nodes of degree at
least $\Delta^{1/2^i}$ can be processed in the $i$th iteration. This is
followed by Lemma \ref{lemma:Mdegree} that argues that the degree of
$G[M_1\cup M_2]$ is $O(\log n)$, and finally Theorem \ref{theorem:treesHighGirth}
that shows our result for graph of girth at least 6 and trees.

\begin{lemma}
\label{lemma:activeDegreeReduction}
For $1 \le i \le i^*$, with probability at least $1 - 1/n^2$, all 
vertices still in $V$ have degree at most $\Delta^{1/2^i}$ at the 
end of iteration $i$.
\end{lemma}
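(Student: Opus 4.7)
The plan is to argue one iteration at a time: for each $i$, given any state of $V$ at the start of iteration $i$, I will show that every vertex $v$ whose current degree exceeds $\Delta^{1/2^i}$ is placed into $M_1 \cup M_2 \cup W$ during iteration $i$ with probability at least $1 - 1/n^3$; a union bound over the at most $n$ such vertices then yields the stated $1 - 1/n^2$ bound. Fix such a $v$ and partition $N(v)$ into the low-degree neighbors $L$ (degree at most $\Delta^{1/2^i}$) and the high-degree neighbors $H$. Since $|L| + |H| = \deg(v) > \Delta^{1/2^i}$, one of $L, H$ has size at least $\Delta^{1/2^i}/2$, giving two cases.

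In the easy case $|L| \ge \Delta^{1/2^i}/2$, each element of $L$ joins $M_2$ independently with probability $p_2 = 6\log n / \Delta^{1/2^i}$, so $\Pr[L \cap M_2 = \emptyset] \le (1-p_2)^{|L|} \le \exp(-3\log n) = 1/n^3$; whenever $L \cap M_2$ is non-empty, $v$ has a neighbor in $M_2$ and enters $W$. The complementary case $|H| \ge \Delta^{1/2^i}/2$ is harder, because the sampling rate $p_1 = 6\log n / \Delta^{1/2^{i-1}}$ used for high-degree vertices is too small for $H$ itself to supply a hit in $M_1$; the fix is to amplify by looking one hop further. This is where the girth hypothesis does its real work: any two distinct neighbors $u_1, u_2 \in H$ of $v$ share no common neighbor other than $v$ (otherwise $v, u_1, w, u_2$ would close a $4$-cycle), so the sets $N(u) \setminus \{v\}$, $u \in H$, are pairwise disjoint. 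Since each $u \in H$ contributes more than $\Delta^{1/2^i}$ such vertices, the $2$-hop set $S := \bigcup_{u \in H}(N(u) \setminus \{v\})$ has size at least $(\Delta^{1/2^i}/2) \cdot \Delta^{1/2^i} = \Delta^{1/2^{i-1}}/2$. Every $w \in S$ joins $M_1 \cup M_2$ independently with probability at least $p_1$ (using $p_1 \le p_2$ irrespective of $w$'s degree class), giving $\Pr[S \cap (M_1 \cup M_2) = \emptyset] \le (1-p_1)^{|S|} \le 1/n^3$; when the intersection is non-empty, $\dist(v, M_1 \cup M_2) \le 2$ and $v$ again enters $W$.

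The main obstacle is precisely this second case, and it is the only place where the girth hypothesis is indispensable: without the disjointness of the sets $N(u) \setminus \{v\}$, $|S|$ could collapse from $\Theta(\Delta^{1/2^{i-1}})$ down to as little as $|H| = \Theta(\Delta^{1/2^i})$, which would destroy the concentration bound at the rate $p_1$. Everything else — the first case, the $(1-p)^k \le e^{-pk}$ estimates, and the final union bound over the $n$ high-degree vertices — is routine.
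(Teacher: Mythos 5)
Your proof is correct and takes essentially the same route as the paper's: the paper's Type I/Type II split on whether at least half of $v$'s neighbors have degree exceeding $\Delta^{1/2^i}$ is exactly your $|H|$-versus-$|L|$ dichotomy, with the same sampling probabilities, the same lower bound of $\Delta^{1/2^{i-1}}/2$ on the size of the 2-neighborhood, the same per-vertex bound of $1 - 1/n^3$, and the same concluding union bound. The only difference is expository: you spell out why girth at least 6 forces the sets $N(u) \setminus \{v\}$ to be pairwise disjoint (no 4-cycles through $v$), a step the paper compresses into the single remark ``here we use the fact that $G$ has girth at least 6.''
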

\begin{proof}
Consider a vertex $v \in V$ at the start of iteration $i$ that has 
degree greater than $\Delta^{1/2^i}$.
Vertex $v$ can have one of two types:
\begin{description}
\item[Type I]: $v$ is of Type I if at least half of $v$'s neighbors 
have degree greater than $\Delta^{1/2^i}$.
\item[Type II]: $v$ is of Type II if fewer than half of $v$'s neighbors 
have degree greater than $\Delta^{1/2^i}$.
\end{description}
\vspace{0.1in}
If $v$ is of Type I, then there are at least $1/2 \cdot \Delta^{1/2^i} \cdot
\Delta^{1/2^i} = \Delta^{1/2^{i-1}}/2$ vertices in $v$'s 2-neighborhood.
Here we use the fact that $G$ has girth at least 6.
Now note that any vertex $u$ in $v$'s 2-neighborhood is added to $M_1 \cup
M_2$ with probability at least $\frac{6\log n}{\Delta^{1/2^{i-1}}}$.
Therefore, the probability that no vertex in $v$'s 2-neighborhood is added to
$M_1 \cup M_2$ is at most $(1 - \frac{6\log n}{\Delta^{1/2^{i-1}}})^{|N_2(v)|}$,
where $N_2(v)$ denotes the 2-neighborhood of vertex $v$.
Here we use the fact that vertices are added to $M_1 \cup M_2$ independently.
Using the lower bound $|N_2(v)| \ge \Delta^{1/2^{i-1}}/2$, we see that 
\begin{eqnarray*}
\Pr[v\mbox{ is added to }M_1 \cup M_2 \cup W] & \ge & 1 - \left(1 - \frac{6 \cdot \log n}{\Delta^{1/2^{i-1}}}\right)^{\frac{\Delta^{1/2^{i-1}}}{2}}\\
                                              & \ge & 1 - e^{-3 \cdot \log n} = 1 - \frac{1}{n^3}
\end{eqnarray*}
If $v$ is of Type II, then more than half of $v$'s neighbors have degree less than or equal to $\Delta^{1/2^i}$.
Each such ``low degree'' neighbor is added to $M_2$ with probability
$6 \log n/\Delta^{1/2^i}$.
Therefore,
\begin{eqnarray*}
\Pr[v\mbox{ is added to }M_1 \cup M_2 \cup W] & \ge & 1 - \left(1 - \frac{6 \cdot \log n}{\Delta^{1/2^i}}\right)^{\frac{\Delta^{1/2^i}}{2}}\\
                                              & \ge & 1 - e^{-3 \cdot \log n} 
= 1 - \frac{1}{n^{3}}
\end{eqnarray*}
In either case, $v$ is added to $M_1 \cup M_2 \cup W$ with probability at least $1 - 1/n^{3}$.
Therefore, by the union bound every node of degree greater than $\Delta^{1/2^i}$ is added to $M_1 \cup M_2 \cup W$ 
with probability at least $1 - 1/n^{2}$. 
Therefore, at the end of iteration $i$, with probability at least $1 - 1/n^2$,
there are no vertices in $V$ with degree more than $\Delta^{1/2^i}$.
\end{proof}

\begin{corollary}
With probability at least $1 - 1/n^2$, after all $i^*$ iterations of the for-loop in Algorithm \textsc{RulingSet-HG},
the graph $G$ has maximum degree at most $6\log n$.
\end{corollary}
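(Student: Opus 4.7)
The plan is to obtain the corollary as an essentially immediate consequence of Lemma \ref{lemma:activeDegreeReduction}, invoked at the final iteration $i = i^*$, combined with the definition of $i^*$.

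First, I would recall that $i^*$ is defined as the smallest positive integer satisfying $\Delta^{1/2^{i^*}} \le 6 \log n$. Thus the upper bound on the degree guaranteed by the lemma at the $i^*$-th iteration is already at most $6\log n$, which is exactly the quantity in the corollary's conclusion. So once I can justify applying Lemma \ref{lemma:activeDegreeReduction} with $i = i^*$, the remainder of the argument reduces to substituting the definition of $i^*$.

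Second, I would observe that Lemma \ref{lemma:activeDegreeReduction} is a per-iteration guarantee: it asserts that, conditioned on any state of $V$ at the start of iteration $i$, with probability at least $1 - 1/n^2$ no vertex of degree exceeding $\Delta^{1/2^i}$ survives in $V$ at the end of that iteration. (The proof of the lemma, by the two-case Type I / Type II analysis together with a union bound over the vertices at the start of iteration $i$, does not require that earlier iterations succeeded.) Consequently, applying this directly at $i = i^*$ gives that, with probability at least $1 - 1/n^2$, every vertex remaining in $V$ at the end of iteration $i^*$ has degree at most $\Delta^{1/2^{i^*}} \le 6\log n$, which is exactly the statement of the corollary.

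I do not expect a genuine obstacle here: the only point that deserves a moment of care is to make explicit that Lemma \ref{lemma:activeDegreeReduction} can be invoked at $i = i^*$ without paying a union bound over earlier iterations, because the statement we want is entirely about the state of $V$ after iteration $i^*$ and the lemma already delivers that statement unconditionally. If one preferred to emphasize correctness of the degree-reduction schedule for all iterations simultaneously (which is what is actually needed to control the degrees of $H[M_1 \cup M_2]$ in the next lemma), a union bound over the $i^* = O(\log\log \Delta)$ iterations would be applied instead, costing at most an $i^*/n^2 = O(\log\log n)/n^2$ failure probability, which is still absorbed into a high-probability bound.
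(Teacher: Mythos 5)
Your proposal is correct and matches the paper's (implicit) argument: the corollary follows immediately by invoking Lemma \ref{lemma:activeDegreeReduction} at $i = i^*$ and substituting the defining property $\Delta^{1/2^{i^*}} \le 6\log n$. Your observation that the lemma's per-iteration guarantee needs no union bound over earlier iterations is exactly consistent with the corollary's stated failure probability of $1/n^2$ rather than $i^*/n^2$.
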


\begin{lemma}
\label{lemma:Mdegree}
Consider an arbitrary iteration $1 \le i \le i^*$ and let 
$H = G[M_1 \cup M_2]$.
With probability at least $1 - 2/n$, the maximum degree of a vertex in $H[M_j]$, $j = 1, 2$
is at most $12 \cdot \log n$.
\end{lemma}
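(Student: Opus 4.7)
The plan is to mirror the structure of the proof of Lemma \ref{lemma:degreeBound} from the general-graph section, but with the two separate probability regimes $\frac{6\log n}{\Delta^{1/2^{i-1}}}$ (for $M_1$) and $\frac{6\log n}{\Delta^{1/2^{i}}}$ (for $M_2$) now playing the role of the single probability $\frac{6\log n \cdot f^i}{\Delta}$. As in that earlier proof, I would condition on the event that, at the start of iteration $i$, every vertex still in $V$ has degree at most $\Delta^{1/2^{i-1}}$ in $G$; this holds trivially for $i=1$ and, by Lemma \ref{lemma:activeDegreeReduction} applied to iteration $i-1$, with probability at least $1-1/n^{2}$ for $i>1$.

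Next I would bound the expected degree of a vertex inside each $H[M_j]$. Fix $v$ and let $D_1(v)$ (respectively $D_2(v)$) be the number of neighbors of $v$ that land in $M_1$ (respectively $M_2$). For $M_1$: a neighbor of $v$ can contribute to $D_1(v)$ only if it has degree greater than $\Delta^{1/2^i}$ in the current graph, and such a neighbor joins $M_1$ independently with probability $\frac{6\log n}{\Delta^{1/2^{i-1}}}$. Since under the conditioning $v$ has at most $\Delta^{1/2^{i-1}}$ neighbors, we get
\[
E[D_1(v)] \;\le\; \Delta^{1/2^{i-1}} \cdot \frac{6\log n}{\Delta^{1/2^{i-1}}} \;=\; 6\log n.
\]
For $M_2$: only neighbors of $v$ with degree at most $\Delta^{1/2^i}$ contribute, and each such neighbor joins $M_2$ independently with probability $\frac{6\log n}{\Delta^{1/2^{i}}}$; bounding the number of such neighbors by $\Delta^{1/2^{i-1}}$ is too loose, but by restricting attention to $v \in M_2$ (which is all we need, since $H[M_2]$ only contains such vertices as endpoints) we have $\deg(v) \le \Delta^{1/2^i}$, so
\[
E[D_2(v)] \;\le\; \Delta^{1/2^{i}} \cdot \frac{6\log n}{\Delta^{1/2^{i}}} \;=\; 6\log n.
\]

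With the expected degrees bounded by $6\log n$, I would apply the upper-tail Chernoff bound from the preliminaries with $\epsilon = 1$. Since the bound $\Pr[X \ge 12\log n]$ is monotone in the underlying Bernoulli probabilities, we may replace $E[D_j(v)]$ by $6\log n$ (adding fictitious independent trials if necessary) and obtain
\[
\Pr[D_j(v) \ge 12\log n] \;\le\; \exp(-6\log n/3) \;=\; 1/n^{2}
\]
for each $v$ and each $j \in \{1,2\}$. A union bound over the $n$ vertices and the two sets gives a failure probability of at most $2/n$ conditional on the event from Lemma \ref{lemma:activeDegreeReduction}. Finally, removing the conditioning costs an additional $1/n^{2}$, which is absorbed into the stated $2/n$ slack.

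The only genuinely delicate point is the $M_2$ case: one must notice that for vertices $v \in M_2$ we are allowed to use the tighter degree bound $\deg_G(v) \le \Delta^{1/2^i}$ rather than the weaker $\Delta^{1/2^{i-1}}$ bound that comes from the conditioning event, since otherwise the expectation computation would only yield $6\log n \cdot \Delta^{1/2^{i}}$, far too large. Vertices in $V \setminus M_2$ need not be considered because they do not appear in $H[M_2]$ at all. Once this observation is in hand, the Chernoff-plus-union-bound finish is entirely routine.
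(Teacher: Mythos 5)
Your proof is correct and takes essentially the same route as the paper's: condition on the degree bound supplied by Lemma \ref{lemma:activeDegreeReduction}, bound the expected degree in each $H[M_j]$ by $6\log n$ (using, for $M_2$, exactly the key observation you flag, that $v \in M_2$ forces $\deg(v) \le \Delta^{1/2^i}$), then finish with Chernoff, a union bound, and removal of the conditioning. The only differences are cosmetic: you are more explicit than the paper about the Chernoff technicality when $E[D_j(v)] < 6\log n$, and your final accounting ($2/n$ plus $1/n^2$) is a hair loose, though it is immediately repaired by noting that $M_1$ and $M_2$ are disjoint so each vertex contributes only one failure event, which is how the paper arrives at exactly $1 - 2/n$.
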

\begin{proof}
We condition on the event that all vertices that are in $V$ at the beginning 
of an iteration $i$ have degree at most $\Delta^{1/2^{i-1}}$.
For $i = 1$, this event happens with probability 1 and for $i > 1$,
Lemma \ref{lemma:activeDegreeReduction} implies that this event happens
with probability at least $1 - 1/n^2$.
Consider a vertex $v \in V$ that is added to $M_1$. 
Let $\deg_{M_1}(v)$ denote the degree of vertex $v$ in $G[M_1]$.
Then,
$$E[\deg_{M_1}(v)] \le \Delta^{1/2^{i-1}} \cdot \frac{6 \cdot \log n}{\Delta^{1/2^{i-1}}} = 6 \cdot \log n.$$
Here we use the fact that $\deg(v) \le \Delta^{1/2^{i-1}}$ for all $v \in V$ at the start of iteration $i$.
Similarly, for a vertex $v \in V$ that is added to $M_2$, let $\deg_{M_2}(v)$
denote the degree of vertex $v$ in $G[M_2]$. Then,
$$E[\deg_{M_2}(v)] \le \Delta^{1/2^i} \cdot \frac{6 \cdot \log n}{\Delta^{1/2^i}} = 6 \cdot \log n.$$
Here we use the fact that $v$ is added to $M_2$ only if $\deg(v) \le \Delta^{1/2^i}$.
Since vertices join $M_1$ independently, using Chernoff bounds we conclude that
$\Pr[\deg_{M_1}(v) \ge 12 \cdot \log n] \le 1/n^2$.
Similarly, we conclude that 
$\Pr[\deg_{M_2}(v) \ge 12 \cdot \log n] \le 1/n^{2}$.
Therefore, with probability at least $1 - 1/n$ the maximum degree of $G[M_1 \cup
M_2]$ is at most $12 \log n$.
We now drop the conditioning on the event that all vertices that are in $V$ at 
the beginning of iteration $i$ have degree at most $\Delta^{1/2^{i-1}}$ and use
Lemma \ref{lemma:activeDegreeReduction} and the union bound to obtain the lemma.
\end{proof}

\begin{theorem}
\label{theorem:treesHighGirth}
Algorithm \textsc{RulingSet-HG} computes a 3-ruling set of $G$.
If $G$ is a graph with girth at least 6 then \textsc{RulingSet-HG}
terminates in $\exp(O(\sqrt{\log \log n}))$ rounds with high probability.
If $G$ is a tree then \textsc{RulingSet-HG} terminates in $O((\log\log n)^2 \cdot \log\log\log n)$ rounds with high probability.
\end{theorem}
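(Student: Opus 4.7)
The plan is to prove the theorem in three parts: first, the structural claim that $I$ is a valid 3-ruling set; second, the round complexity for high-girth graphs; and third, the specialization to trees.

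For the 3-ruling set property, I would argue that every vertex $v \in V$ falls into exactly one of three cases based on how it is ultimately removed. First, if $v$ enters $M_1 \cup M_2$ in some iteration $i$, then the \texttt{MIS-TWOSTAGE} call on $G[M_1 \cup M_2]$ places either $v$ itself or a neighbor of $v$ in $I$, so $v$ is within 1 hop of $I$. Second, if $v$ enters $W$ in some iteration $i$, then by definition $\dist_G(v, u) \le 2$ for some $u \in M_1 \cup M_2$; combined with the previous case, $u$ is within 1 hop of some $w \in I$, so $\dist_G(v, w) \le 3$. Third, if $v$ survives all $i^*$ iterations, it is included in the final \texttt{MIS} call on the residual graph, placing $v$ or a neighbor in $I$. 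Independence of $I$ across iterations follows because at the end of each iteration the entire 2-neighborhood of $M_1 \cup M_2$ is deleted from $V$, so no edges exist between $M_1 \cup M_2$ selected in different iterations, nor between any such set and the final residual graph.

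For the high-girth running time, I would first invoke Lemma \ref{lemma:activeDegreeReduction} to conclude that after $i^* = O(\log\log \Delta)$ iterations the graph induced on the surviving $V$ has maximum degree $O(\log n)$ w.h.p.; and Lemma \ref{lemma:Mdegree} to conclude that in every iteration $G[M_1 \cup M_2]$ has maximum degree $O(\log n)$ w.h.p. Each iteration consists of $O(1)$ rounds of local random choices and neighborhood/2-neighborhood computations, plus one invocation of \texttt{MIS-TWOSTAGE}. Using Theorem 7.2 of \cite{BEPS12}, each such invocation runs in $\exp(O(\sqrt{\log\log n}))$ rounds on a graph of max degree $O(\log n)$. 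Multiplying by $i^* = O(\log\log \Delta) = O(\log\log n)$ iterations, and adding the $\exp(O(\sqrt{\log\log n}))$ rounds for the final MIS on the residual graph, the total is $O(\log \log n) \cdot \exp(O(\sqrt{\log\log n})) + \exp(O(\sqrt{\log\log n})) = \exp(O(\sqrt{\log\log n}))$, since $\log\log n$ is absorbed into the exponential factor. Taking a union bound over the $O(\log\log n)$ iterations' high-probability events preserves the w.h.p. guarantee.

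The tree case is identical in structure, except we replace Theorem 7.2 of \cite{BEPS12} with Theorem 7.3 of \cite{BEPS12}, which computes an MIS on a tree of max degree $O(\log n)$ in $O(\log\log n \cdot \log\log\log n)$ rounds. Each iteration now costs $O(\log\log n \cdot \log\log\log n)$ rounds, and multiplying by $i^* = O(\log\log n)$ yields $O((\log\log n)^2 \cdot \log\log\log n)$ rounds overall.

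The main subtle point, and where I would spend most care, is the 3-ruling set verification for vertices in $W$: one must observe that the distance bound $\dist_G(v, u) \le 2$ uses distances in the \emph{full} graph $G$ (not just distances after deletions), which is exactly how line 10 of the algorithm is defined, and that the MIS neighbor of $u$ is at distance 1 in $G$ and not merely in the induced subgraph. The second delicate point is the union bound: combining the $O(\log\log n)$ per-iteration failure probabilities of $O(1/n^2)$ from Lemmas \ref{lemma:activeDegreeReduction} and \ref{lemma:Mdegree}, together with the w.h.p. guarantees of the Barenboim et al.\ MIS subroutines, still yields an overall failure probability of $O(\polylog n / n^2) = o(1/n)$, preserving the high-probability claim.
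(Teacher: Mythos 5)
Your proposal is correct and takes essentially the same route as the paper's proof: the same three-way case analysis (vertices joining $M_1 \cup M_2$ are within 1 hop of $I$, vertices in $W$ within 3 hops via the 2-hop bound to $M_1 \cup M_2$, and survivors handled by the final MIS call), and the same runtime accounting of $i^* = O(\log\log \Delta)$ iterations times the cost of \texttt{MIS-TWOSTAGE} via Theorems 7.2 and 7.3 of \cite{BEPS12}, with the $O(\log\log n)$ factor absorbed into $\exp(O(\sqrt{\log\log n}))$ in the high-girth case. One small inaccuracy to note: Lemma \ref{lemma:Mdegree} bounds the maximum degree of each $H[M_j]$, $j = 1, 2$, \emph{separately}, not of $G[M_1 \cup M_2]$ as a whole --- a vertex of $M_1$ can in expectation have up to $6\log n \cdot \Delta^{1/2^i}$ neighbors in $M_2$, which is precisely why the algorithm runs the two MIS computations sequentially via \texttt{MIS-TWOSTAGE} instead of one MIS on the union; since the subroutine only needs the per-part degree bounds, your runtime conclusion is unaffected. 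Your explicit independence argument (deleting the full 2-neighborhood of $M_1 \cup M_2$ each iteration prevents edges between sets chosen in different iterations, so the union of the MISs is independent) and your explicit union bound over the per-iteration failure events are details the paper's proof leaves implicit, and they strengthen rather than change the argument.
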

\begin{proof}
Consider a vertex $v \in V$ that is added to $M_1 \cup M_2 \cup W$ in some
iteration $i$.
Since the algorithm computes an MIS on $G[M_1 \cup M_2]$ and since every
vertex in $W$ is at most 2 hops (via edges in $G$) from some vertex in $M_1
\cup M_2$, it follows that $v$ is at distance at most 3 from a vertex placed
in $I$ in iteration $i$.
A vertex that is not added to $M_1 \cup M_2 \cup W$ ends up 
in the graph whose MIS is computed (in Line 13) and is therefore at 
most 1 hop away from a vertex in $I$.
Thus every vertex in $V$ is at most 3 hops away from some vertex in $I$.

The total running time of the algorithm is $i^*$ times the worst case running time 
the call to the \texttt{MIS} subroutine in Line 8 plus the running time of the call
to the \texttt{MIS} subroutine in Line 13.
This implies that in the case of graphs of girth at least 6, Algorithm \textsc{RulingSet-HG} 
runs in $\exp(O(\sqrt{\log\log n})) \cdot O(\log\log \Delta) =
\exp(O(\sqrt{\log\log n}))$  rounds. 
In the case of trees, Algorithm \textsc{RulingSet-HG} runs in 
$O(\log\log \Delta \cdot \log\log n \cdot \log\log \log n) = O((\log\log n)^2
\cdot \log\log\log n)$ rounds.
\end{proof}

\section{Graphs with Bounded Arboricity}
\label{sec:arboricity}
In the previous section, we used the fact that the absence of
short cycles induces enough independence so that in each iteration, with high
probability the ``high degree'' nodes join the set $M_1\cup M_2\cup W$. This has
allowed us to process nodes of degrees in the
range $(\Delta^{1/2^i},\Delta^{1/2^{i-1}}]$ in iteration $i$. 
In this section, we show that a 3-ruling set can be computed even in the
presence of short cycles provided the  graph has an arboricity bounded by
$\log^k n$ for a constant $k$. The algorithm we use for this case is essentially
similar to that of Algorithm {\sc RulingSet-HG} from Section
\ref{section:girth5}. Recall from Section \ref{section:girth5} that $i^*$ refers
to the smallest positive integer such that $\Delta^{1/2^{i^*}} \leq
6\cdot \log n$. We make the following changes to Algorithm
\textsc{RulingSet-HG} to adapt it to graphs of arboricity $a = a(G)$.

\begin{itemize}
\item In iteration $i$, for $1\leq i \leq i^*$, a node $v$ that has a degree 
at least $\Delta^{1/2^i}$ joins the set $M_1$ with probability $\frac{6\cdot a\log
n}{\Delta^{1/2^{i-1}}}$. (See Line 6 of Algorithm \textsc{RulingSet-HG}.)
\item In iteration $i$, for $1\leq i \leq i^*$, a node $v$ with degree less than $\Delta^{1/2^i}$
joins $M_2$ with probability  $\frac{6\cdot a\log n}{\Delta^{1/2^i}}$. (See Line
7 of Algorithm \textsc{RulingSet-HG}).
\end{itemize}

In the following, we show lemmas equivalent to Lemma
\ref{lemma:activeDegreeReduction},\ref{lemma:Mdegree} for a graph with $a \in O(\log^k n)$ for a
constant $k$. 

\ignore{
During the analysis, we use the fact that the edges of graphs of 
arboricity $a$ can be oriented so that each node has at most $a$ outgoing edges.
We note that nodes need not compute such an orientation, but the orientation is
used only for the purposes of the analysis. 
\note{TO ELABORATE}
}

\begin{lemma}
\label{lemma:arb1}
Consider any iteration $i$ for $1\leq i \leq i^*$. With probability at least
$1-\frac{1}{n^{2}}$, all nodes still in $V$ have degree at most
$\Delta^{1/2^i}$ at the end of iteration $i$.
\end{lemma}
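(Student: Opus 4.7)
The plan is to mirror the proof of Lemma~\ref{lemma:activeDegreeReduction}. Fix an iteration $i$ and a vertex $v$ with $\deg(v) > \Delta^{1/2^i}$ at the start of iteration $i$; I will show that $v$ lands in $M_1 \cup M_2 \cup W$ with probability at least $1 - 1/n^{3}$, then take a union bound over the at most $n$ such vertices to get the $1 - 1/n^{2}$ bound. Following the high-girth proof, split into two cases according to whether at least half of $v$'s neighbors have degree greater than $\Delta^{1/2^i}$ (Type~I) or at most that (Type~II).

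The Type~II case is nearly identical to its counterpart in Lemma~\ref{lemma:activeDegreeReduction}. Vertex $v$ has at least $\Delta^{1/2^i}/2$ low-degree neighbors, each of which joins $M_2$ independently with probability $\tfrac{6a\log n}{\Delta^{1/2^i}}$, so
\[
\Pr[v \notin M_1 \cup M_2 \cup W] \;\le\; \left(1 - \tfrac{6a\log n}{\Delta^{1/2^i}}\right)^{\Delta^{1/2^i}/2} \;\le\; e^{-3a \log n} \;\le\; 1/n^{3}.
\]
The additional factor of $a$ in the joining probability only strengthens the bound.

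The Type~I case is the main obstacle: the girth-6 argument guaranteeing $|N_{\le 2}(v)| \ge \Delta^{1/2^{i-1}}/2$ is no longer available, and it must be replaced by a counting argument that leverages the arboricity bound. The plan is to invoke (purely for analysis) an acyclic orientation of $G$ in which every vertex has out-degree at most $a$; such an orientation exists since $a(G) \le a$. Let $U$ denote the high-degree neighbors of $v$, so $|U| \ge \Delta^{1/2^i}/2$. Each $u \in U$ has at most $a$ out-neighbors, hence at least $\Delta^{1/2^i} - a$ in-neighbors. Summing gives at least $|U|\cdot(\Delta^{1/2^i} - a)$ in-edges incident to $U$, and each vertex $w$ can appear as an in-neighbor of at most $a$ members of $U$ (since $w$ has at most $a$ out-neighbors). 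After subtracting the at most $O(a\cdot|N[v]|)$ such edges consumed within $N[v]$ (which is bounded using the arboricity of the subgraph induced on $N[v]$), one obtains a lower bound $|N_{\le 2}(v)| = \Omega(\Delta^{1/2^{i-1}}/a)$. Since every vertex at distance at most~$2$ from $v$ joins $M_1 \cup M_2$ with probability at least $\tfrac{6a\log n}{\Delta^{1/2^{i-1}}}$, and these choices are independent,
\[
\Pr[v \notin M_1 \cup M_2 \cup W] \;\le\; \left(1 - \tfrac{6a\log n}{\Delta^{1/2^{i-1}}}\right)^{\Omega(\Delta^{1/2^{i-1}}/a)} \;\le\; 1/n^{3}
\]
after choosing constants so the exponent is $\Theta(\log n)$. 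The factor $a$ in the joining probabilities and the factor $1/a$ in the 2-neighborhood lower bound cancel exactly, which is the reason behind the modification to Algorithm~\textsc{RulingSet-HG}.

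The main bookkeeping hurdle is that $|N[v]|$ can be as large as $\Delta^{1/2^{i-1}}$, so one must be careful when subtracting the contribution of in-edges that land inside $N[v]$ to still get a useful lower bound on the genuine 2-neighborhood; this is where the hypothesis $a \le \log^k n$, coupled with the iteration range (so that $\Delta^{1/2^i}$ dominates $a$ by a polylog factor), is used. A union bound over vertices then yields the stated $1 - 1/n^{2}$ guarantee.
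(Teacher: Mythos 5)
Your high-level plan (two vertex types, per-vertex failure probability $1/n^{3}$, union bound) and your Type~II case both match the paper; in fact your Type~II exponent $\Delta^{1/2^i}/2$ is slightly stronger than the paper's, which uses $\Delta^{1/2^i}/2a$. The real divergence is in Type~I, and there your argument has a genuine gap. The paper never orients edges: it argues in one step that $v$ has at least $\Delta^{1/2^{i-1}}/2a$ vertices within two hops, because otherwise the subgraph induced by $N(v) \cup N_2(v)$ --- which contains the $\Omega(\Delta^{1/2^{i-1}})$ edges incident to the high-degree neighbors of $v$ --- would have density, and hence arboricity, exceeding $a$, contradicting Nash--Williams. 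Every vertex of that ball joins $M_1 \cup M_2$ with probability at least $6a\log n/\Delta^{1/2^{i-1}}$, the factors of $a$ cancel, and the bound follows. Your orientation is guaranteed by the same Nash--Williams theorem, so your route is morally equivalent, but two of your steps fail. First, the per-vertex bound ``each $u \in U$ has at least $\Delta^{1/2^i}-a$ in-neighbors'' and the subsequent subtraction of $O(a\cdot|N[v]|)$ can both be vacuous: your proposed rescue --- that the iteration range makes $\Delta^{1/2^i}$ dominate $a$ by a polylog factor --- is simply false. By definition of $i^*$ we have $\Delta^{1/2^{i^*}} \le 6\log n$, so in the late iterations $\Delta^{1/2^i} = O(\log n)$ while $a$ may be as large as $\log^k n$; for $k \ge 2$ your main term $\Delta^{1/2^i}-a$ is negative precisely in the iterations where the lemma still has to hold. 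Second, the subtraction of edges inside $N[v]$ is not only unfixable as stated (since $|N[v]|$ can be $\Theta(\Delta^{1/2^{i-1}})$, dwarfing the target $\Delta^{1/2^{i-1}}/a$) but unnecessary: Line~10 of \textsc{RulingSet-HG} places $v$ in $W$ whenever $\dist(v, M_1\cup M_2) \le 2$, so vertices at distance $1$ (and $v$ itself) help just as much as vertices at distance exactly $2$.

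Both defects disappear if you run your orientation count in aggregate rather than per vertex: every edge incident to $U$ lies inside the ball $\{v\} \cup N(v) \cup N_2(v)$ and has its tail in that ball, the number of such edges is at least $|U|\cdot\Delta^{1/2^i}/2 \ge \Delta^{1/2^{i-1}}/4$, and each tail carries at most $a$ out-edges, so the ball contains at least $\Delta^{1/2^{i-1}}/4a$ vertices --- with no $-a$ correction and no subtraction, valid for every $1 \le i \le i^*$ regardless of how $a$ compares with $\Delta^{1/2^i}$. Since each vertex $u$ in the ball (low- or high-degree) joins $M_1 \cup M_2$ independently with probability at least $6a\log n/\Delta^{1/2^{i-1}}$, this yields
\[
\Pr[v \notin M_1 \cup M_2 \cup W] \;\le\; \left(1 - \frac{6a\log n}{\Delta^{1/2^{i-1}}}\right)^{\Delta^{1/2^{i-1}}/4a} \;\le\; e^{-(3/2)\log n},
\]
which is the paper's argument up to the constant in the exponent (the paper states the ball-size bound as $\Delta^{1/2^{i-1}}/2a$, giving $1/n^{3}$; with the factor $4$ you should tighten constants, e.g., in the sampling probability, to recover the claimed $1-1/n^{2}$ after the union bound). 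With this repair your proof becomes a faithful, slightly rephrased version of the paper's; as submitted, the Type~I case does not go through.
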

\begin{proof}
For $i=0$, we see that each vertex has degree at most $\Delta$ with probability
1. Hence, the lemma holds for $i=0$. Let us assume inductively that the lemma
holds through the first $i-1$ iterations and let us consider the $i$th
iteration.

Consider a node $v$ still in $V$ at the start of iteration $i$ that has degree
at least $\Delta^{1/2^i}$. We distinguish between two cases. Recall that
for a vertex $v$, $N_2(v)$ refers to the 2-neighborhood of $v$.

\begin{itemize}
\item $v$ has at least half its neighbors each with degree at least $\Delta^{1/2^i}$.
In this case, we notice that $v$ has at least
$\Delta^{1/2^{i-1}}/2a$ nodes at a distance of 2 from $v$. Otherwise, the graph
induced by the set $N(v)\cup N_2(v)$ has an arboricity greater than $a$, which
is a contradiction.
Each of the vertices $u \in N_2(v)$ joins $M_1 \cup M_2$ with probability at
least $\frac{6\cdot a\log n}{\Delta^{1/2^{i-1}}}$. Therefore, 
\[
\begin{array}{lll}
\Pr(v \in M_1\cup M_2 \cup W) & \geq & 1 - (1-\frac{6\cdot a\log
n}{\Delta^{1/2^{i-1}}})^{\Delta^{1/2^{i-1}}/2a} \\
& \geq & 1 - e^{6 \log n/2}  =  1 - 1/n^{3}
\end{array}
\]

\item $v$ has at most half its neighbors each with degree at least 
$\Delta^{1/2^{i}}$. In this case, each such neighbor of $v$ joins $M_2$ with
probability $\frac{c\cdot a\log n}{\Delta^{1/2^{i}}}$. Therefore, we can compute
the probability that $v \in M_1\cup M_2\cup W$ as follows.
\[
\begin{array}{lll}
\Pr(v \in M_1\cup M_2 \cup W) & \geq & 1 - (1-\frac{6\cdot a\log
n}{\Delta^{1/2^{i}}})^{\Delta^{1/2^{i}}/2a} \\
& \geq & 1 - e^{6 \log n/2} =  1 - 1/n^{3}
\end{array}
\]
\end{itemize}

In either case we see that $v$ joins $M_1\cup M_2\cup W$ with a probability of
$1/n^{3}$. Using the union bound, as in the proof
of Lemma \ref{lemma:activeDegreeReduction}, vertices still in $V$ have degree at
most $\Delta^{1/2^i}$ with probability at most $1-\frac{1}{n^2}$.
\end{proof}

Lemma \ref{lemma:Mdegree} also holds with the change that the graph $H[M_j]$ for
$j=1,2$ as defined in Lemma \ref{lemma:Mdegree} has a degree at most
$12\cdot a\log n$. Since $a \in O(\log^k n)$, the above degree is in $O(\log^{k+1}
n)$, with high probability. Therefore, the following theorem holds.

\begin{theorem}
Algorithm \textsc{RulingSet-HG} computes a 3-ruling set of a graph $G$ of
arboricity $a \in O(\log^k  n)$, for a constant $k$, in  $O(\sqrt{\log n}\cdot
(\log\log n)^2+\log^{3/4} n \log\log n)$ rounds.
Further, if $a = O(1)$, then Algorithm \textsc{RulingSet-HG}
computes a 3-ruling set in $O((\log\log n)^3)$ rounds.
\end{theorem}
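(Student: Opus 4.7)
The plan is to parallel the argument of Theorem \ref{theorem:treesHighGirth}, using the two inputs already established for the arboricity setting: Lemma \ref{lemma:arb1}, which guarantees after iteration $i$ that all surviving vertices have degree at most $\Delta^{1/2^i}$; and the arboricity analogue of Lemma \ref{lemma:Mdegree} noted just after its statement, which guarantees that $H[M_j]$, $j=1,2$, has maximum degree $O(a \log n) = O(\log^{k+1} n)$ with high probability. The only thing that genuinely changes from Theorem \ref{theorem:treesHighGirth} is the choice of MIS subroutine used inside \texttt{MIS-TWOSTAGE} and in Line 13, together with the running-time bookkeeping.

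First I would handle correctness. The argument that $I$ is a 3-ruling set is identical to the one in Theorem \ref{theorem:treesHighGirth}: any vertex $v$ is either (a) inserted into $M_1 \cup M_2$ in some iteration and hence within one hop of $I$ via the local MIS, (b) inserted into $W$ in some iteration and hence within two hops of $M_1 \cup M_2$ and within three hops of $I$, or (c) still alive at Line 13, in which case it is within one hop of $I$ via the final MIS call. This argument uses nothing about girth, only the 2-neighborhood removal structure of the main loop.

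Next I would establish the running-time bound in two layers. The number of iterations is $i^* = O(\log\log \Delta) = O(\log\log n)$. For the general bounded-arboricity case $a = O(\log^k n)$, inside each iteration the subgraph $H[M_1 \cup M_2]$ has maximum degree $O(\log^{k+1} n)$, so I would invoke the BEPS general-graph MIS algorithm at cost $O(\sqrt{\log n} \cdot \log\log n)$ per iteration, yielding $O(\sqrt{\log n} \cdot (\log\log n)^2)$ summed over all iterations. The surviving graph at Line 13 has maximum degree at most $6 \log n$, and the MIS call there accounts for the additive $O(\log^{3/4} n \cdot \log\log n)$ term, matching the first part of the theorem. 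For the sharper case $a = O(1)$, I would instead use the BEPS bounded-arboricity MIS algorithm, whose cost $O(\log \Delta \,(\log \Delta + \tfrac{\log\log n}{\log\log\log n}))$ evaluates to $O((\log\log n)^2)$ for every call, since $\log \Delta = O(\log\log n)$ both on $H[M_1 \cup M_2]$ and on the surviving $O(\log n)$-degree graph at Line 13; multiplying by $i^* = O(\log\log n)$ iterations plus the single final call gives $O((\log\log n)^3)$.

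The main obstacle I expect is not any single inequality but rather the bookkeeping of plugging the right MIS subroutine into each of the two places where MIS is invoked, and verifying that the $a$-inflated sampling probabilities in Section \ref{sec:arboricity} do not push the degree of $H[M_1 \cup M_2]$ beyond the $O(a \log n)$ bound needed to make those subroutines efficient. This is precisely what Lemma \ref{lemma:arb1} and the noted variant of Lemma \ref{lemma:Mdegree} provide, reducing the remaining work to a union bound over the $i^* = O(\log\log n)$ iterations to retain the ``with high probability'' guarantee on every step.
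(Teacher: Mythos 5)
Your proposal is correct, and its skeleton (correctness inherited verbatim from Theorem \ref{theorem:treesHighGirth}, then $i^* = O(\log\log n)$ iterations times a per-iteration MIS cost, plus a final call at Line 13, with a union bound over iterations via Lemma \ref{lemma:arb1} and the $O(a\log n)$-degree variant of Lemma \ref{lemma:Mdegree}) is exactly the paper's. The one genuine divergence is the subroutine choice in the $a = O(\log^k n)$ case: the paper invokes the arboricity-based MIS algorithm of \cite[Theorem 6.4]{BEPS12}, costing $O(\sqrt{\log n}\log\log n + \log^{3/4} n)$ per call, so that the additive $\log^{3/4} n\log\log n$ term in the theorem arises from summing the per-iteration $\log^{3/4} n$ over all $O(\log\log n)$ iterations. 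You instead run the general-graph $O(\log\Delta\cdot\sqrt{\log n})$ algorithm, exploiting the whp degree bound $\Delta(H[M_1\cup M_2]) = O(\log^{k+1} n)$, which gives $O(\sqrt{\log n}\log\log n)$ per call and hence a total of $O(\sqrt{\log n}\,(\log\log n)^2)$ --- strictly inside the claimed bound, so your route actually proves a marginally stronger statement for this case and renders the $\log^{3/4} n\log\log n$ term superfluous. One bookkeeping slip to fix: you claim the Line 13 call ``accounts for'' the additive $O(\log^{3/4} n\cdot\log\log n)$ term, but that call is on a graph of degree $O(\log n)$ and costs only $O(\sqrt{\log n}\log\log n)$ under your chosen subroutine (or $O(\sqrt{\log n}\log\log n + \log^{3/4} n)$ under the paper's); the attribution is inaccurate, though harmless since it only overestimates an upper bound. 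Your handling of the $a = O(1)$ case coincides with the paper's: $O((\log\log n)^2)$ per call via the $O(\log\Delta(\log\Delta + \frac{\log\log n}{\log\log\log n}))$ bound of \cite[Theorem 6.4]{BEPS12} with $\Delta(H[M_j]) = O(\log n)$, times $O(\log\log n)$ iterations.
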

\begin{proof}
An MIS on $G[M_1\cup M_2]$ is a 3-ruling set for vertices that
join $M_1 \cup M_2 \cup W$ in the $i$th iteration of the algorithm as shown in
the proof of Theorem \ref{theorem:treesHighGirth}. In the rest of the proof, we only
concentrate on the runtime of Algorithm \textsc{RulingSet-HG} on graphs of
arboricity $a$.

The graph $H[M_j]$ for $j=1,2$ as defined in Lemma \ref{lemma:Mdegree} has an
arboricity of $a$ and poly-logarithmic degree. Hence, an MIS of $H[M_j]$ can be computed in
$O(\sqrt{\log n}\log\log n + \log^{3/4} n)$ rounds using 
\cite[Theorem 6.4]{BEPS12}. Since there
are $O(\log\log \Delta)$ iterations, the overall running time is $O(\sqrt{\log
n}\cdot (\log\log n)^2 + \log^{3/4} n\log\log n)$.

For small $a$, we can compute an MIS of $H[M_j]$, $j=1,2$ in time $O(\log
\Delta(H[M_j])\cdot (\log \Delta(H[M_j]) + \frac{\log\log n}{\log\log\log n}))$
rounds according to \cite[Theorem 6.4]{BEPS12}. Using this result with
$\Delta(H[M_j]) = O(\log n)$ for $j=1,2$, yields the theorem.
\end{proof}

\ignore {
\section{Removing Assumptions Global Knowledge}
Our algorithms in Section \ref{section:general}--\ref{section:arboricity}
require nodes to know the values of  $n$ and $\Delta$ to work correctly. In this
section, we argue that with small modifications to our algorithms, nodes need to
know only their local neighbors. 

To remove the knowledge of $\Delta$, we proceed as follows. Each node starts
several executions of the Algorithm {\sc RulingSet} in parallel. The $i$th
execution uses a value of $2^i$ as the guess for $\Delta$. Each node also
transmits to its neighbors an array $M$ where the $i$th element of the $M$
indicates whether this node is in $M$ in the $i$th execution or not. A value of
1 at index $i$ of $M$ indicates that node $v$ is in $M$ in the $i$th concurrent
execution. In this fashion, the message size that has to be transmitted by each
node is still in $O(\log n)$. 
}

\section{Conclusions}
\label{sec:conclusions}

Our work is the first positive evidence that $O(1)$-ruling sets can be
constructed much more quickly than an MIS and in sub-logarithmic rounds
even on general graphs. A major open question that our work raises is
the possibility of quickly extending an $O(1)$-ruling set to an MIS.
Another direction worth exploring is the application of our
sparsification technique to design sub-logarithmic time distributed
approximation algorithms.

\bibliographystyle{plain}
\bibliography{distComp}
\end{document}